\newcommand{\SarielComp}[1]{}
\newcommand{\NotSarielComp}[1]{#1}%
\newcommand{\SarielComp}[1]{#1}%
\newcommand{\NotSarielComp}[1]{}%
\newcommand{\IfPrinterVer}[2]{#2}%
\newcommand{\UsePackage}[1]{%
  \IfFileExists{../styles/#1.sty}{%
      \usepackage{../styles/#1}%
   }{%
      \IfFileExists{./styles/#1.sty}{%
         \usepackage{styles/#1}%
      }{%
         \usepackage{#1}%
      }%
   }%
}
\newcommand{\SICOMPVer}[1]{}
\newcommand{\NotSICOMPVer}[1]{#1}
\newcommand{\SICOMPVer}[1]{#1}
\newcommand{\NotSICOMPVer}[1]{}
\newcommand{\periodafter}[1]{#1.}
\titleformat{\paragraph}[runin]
  {\normalfont\bfseries}
  {\theparagraph}
  {1em}
  {\periodafter}
\newlist{compactenumA}{enumerate}{5}%
\setlist[compactenumA]{topsep=0pt,itemsep=-1ex,partopsep=1ex,parsep=1ex,%
   label=(\Alph*)}%
   \setlist[compactenumA]{topsep=5pt,itemsep=0ex,partopsep=1ex,%
      parsep=1ex,%
      leftmargin=5ex,
      label=(\Alph*)}%
\newlist{compactenumi}{enumerate}{5}%
\setlist[compactenumi]{topsep=0pt,itemsep=-1ex,partopsep=1ex,parsep=1ex,%
   label=(\roman*)}%
\newlist{paraenumi}{enumerate*}{5}%
\setlist[paraenumi]{topsep=0pt,itemsep=-1ex,partopsep=1ex,parsep=1ex,%
   label=(\roman*)}%
\providecommand{\BibLatexMode}[1]{}
\providecommand{\BibTexMode}[1]{#1}
  \renewcommand{\BibLatexMode}[1]{}
  \renewcommand{\BibTexMode}[1]{#1}
  \renewcommand{\BibLatexMode}[1]{#1}
  \renewcommand{\BibTexMode}[1]{}
\newcommand{\hrefb}[3][black]{\href{#2}{\color{#1}{#3}}}%
\definecolor{blue25}{rgb}{0,0,0.7}
\providecommand{\emphic}[2]{%
   \textcolor{blue25}{%
      \textbf{\emph{#1}}}%
   \index{#2}}
\providecommand{\emphi}[1]{\emphic{#1}{#1}}
\theoremstyle{plain}%
\newtheorem{theorem}{Theorem}[section]
\newtheorem{lemma}[theorem]{Lemma}
\newtheorem{corollary}[theorem]{Corollary}
\newtheorem{defn}[theorem]{Definition}
\newtheorem*{defn:unnumbered}[FakeCounter]{Definition}
\newtheorem{observation}[theorem]{Observation}
\theoremstyle{plain}%
\newtheorem*{remark:unnumbered}[FakeCounter]{Remark}%
\newtheorem{defn}[theorem]{Definition}
\newtheorem*{defn:unnumbered}[FakeCounter]{Definition}
\newcommand{\myqedsymbol}{\rule{2mm}{2mm}}
\theoremstyle{nonumberplain}%
\newtheorem{proof}{Proof:}%
\newcommand{\atgen}{\symbol{'100}}
\newcommand{\SarielThanks}[1]{\thanks{Department of Computer Science;
      University of Illinois; 201 N. Goodwin Avenue; Urbana, IL,
      61801, USA; {\tt sariel\atgen{}illinois.edu}; {\tt
         \url{http://sarielhp.org/}.} #1}}
\newcommand{\MitchellThanks}[1]{%
   \thanks{%
      Department of Computer Science;
      University of Illinois; 201 N. Goodwin Avenue; Urbana, IL,
      61801, USA; {\tt mfjones2\atgen{}illinois.edu}; {\tt
         \url{http://mfjones2.web.engr.illinois.edu/}.} #1}}
\numberwithin{figure}{section}%
\numberwithin{table}{section}%
\numberwithin{equation}{section}%
\newcommand{\HLink}[2]{\hyperref[#2]{#1~\ref*{#2}}}
\newcommand{\HLinkSuffix}[3]{\hyperref[#2]{#1\ref*{#2}{#3}}}
\newcommand{\figlab}[1]{\label{fig:#1}}
\newcommand{\figref}[1]{\HLink{Figure}{fig:#1}}
\newcommand{\thmlab}[1]{{\label{theo:#1}}}
\newcommand{\thmref}[1]{\HLink{Theorem}{theo:#1}}
\newcommand{\corlab}[1]{\label{cor:#1}}
\newcommand{\seclab}[1]{\label{sec:#1}}
\newcommand{\secref}[1]{\HLink{Section}{sec:#1}}
\providecommand{\deflab}[1]{\label{def:#1}}
\newcommand{\defref}[1]{\HLink{Definition}{def:#1}}
\newcommand{\apndlab}[1]{\label{apnd:#1}}
\newcommand{\apndref}[1]{\HLink{Appendix}{apnd:#1}}
\newcommand{\tbllab}[1]{\label{table:#1}}
\newcommand{\tblref}[1]{\HLink{Table}{table:#1}}
\newcommand{\lemlab}[1]{\label{lemma:#1}}
\newcommand{\lemref}[1]{\HLink{Lemma}{lemma:#1}}%
\newcommand{\itemlab}[1]{\label{item:#1}}
\newcommand{\itemref}[1]{\HLinkSuffix{}{item:#1}{}}
\newcommand{\obslab}[1]{\label{obsv:#1}}
\newcommand{\obsref}[1]{\HLink{Observation}{obsv:#1}}
\providecommand{\eqlab}[1]{}%
\renewcommand{\eqlab}[1]{\label{equation:#1}}
\providecommand{\remove}[1]{}%
\newcommand{\Set}[2]{\left\{ #1 \;\middle\vert\; #2 \right\}}
\newcommand{\pth}[2][\!]{\mleft({#2}\mright)}%
\newcommand{\pbrcx}[1]{\left[ {#1} \right]}%
\newcommand{\ceil}[1]{\left\lceil {#1} \right\rceil}
\newcommand{\floor}[1]{\left\lfloor {#1} \right\rfloor}
\newcommand{\brc}[1]{\left\{ {#1} \right\}}
\newcommand{\cardin}[1]{\left| {#1} \right|}%
\renewcommand{\th}{th\xspace}
\renewcommand{\Re}{\mathbb{R}}%
\newcommand{\Na}{\mathbb{N}}
\newcommand{\ts}{\hspace{0.6pt}}
\newcommand{\Zorder}{\ensuremath{\mathrm{Z}}-order\xspace}%
\newcommand{\URorder}{%
    \protect\raisebox{-0.5pt}{%
      \protect\rotatebox{90}{\ensuremath{\mathrm{U}}}%
    }%
    \ts-order\xspace}%
\newcommand{\ZBorder}{%
    \protect\raisebox{-0.5pt}{%
      \protect\reflectbox{\ensuremath{\mathrm{Z}}}%
    }%
    \ts-order\xspace}%
\providecommand{\Mh}[1]{#1}
\newcommand{\eps}{\varepsilon}
\newcommand{\epsA}{\eps}%
\newcommand{\lgeps}{\lg(1/\eps)}
\newcommand{\logeps}{\log(1/\eps)}
\newcommand{\Oe}{O_\eps}
\newcommand{\Od}{O_d}
\newcommand{\Ode}{O_{d,\eps}}
\newcommand{\dist}[1]{\| #1 \|}
\newcommand{\diamX}[1]{\mathsf{diam}\pth{#1}}
\newcommand{\abs}[1]{\left\vert #1 \right\vert}
\newcommand{\zeroone}{\brc{0,1}}
\newcommand{\pset}{P}
\newcommand{\rset}{R}
\newcommand{\bset}{B}
\newcommand{\pt}{p}%
\newcommand{\ptq}{q}%
\newcommand{\pts}{s}
\newcommand{\ptt}{t}%
\newcommand{\ptr}{\Mh{r}}%
\newcommand{\ptb}{\Mh{b}}%
\newcommand{\ptopt}{\pt^\star}
\newcommand{\qt}{\mathcal{T}}
\newcommand{\qte}{\mathcal{T}_\eps}
\newcommand{\grid}{\mathsf{G}}
\newcommand{\cell}{\Box}%
\newcommand{\order}{\sigma}
\newcommand{\orderset}{\Pi}
\newcommand{\ordAll}{\orderset^+}%
\newcommand{\OrdChar}{\mathfrak{O}}%
\newcommand{\OrdY}[2]{\mathfrak{O}\pth{#1,#2}}%
\newcommand{\xor}{\oplus}
\newcommand{\msb}{\mathrm{msb}}
\newcommand{\indic}[1]{\mathbbm{1}\!\pbrcx{#1}}
\newcommand{\graph}{G}
\newcommand{\graphdist}[2]{\mathsf{d}_{\graph}\pth{#1, #2}}
\newcommand{\graphedges}{E}
\newcommand{\distY}[2]{\left\| {#1} - {#2} \right\|}
\newcommand{\distSetY}[2]{d\pth{ #1, #2}}%
\newcommand{\SaveContent}[2]{%
   \expandafter\newcommand{#1}{#2}%
}
\newcommand{\RestatementOf}[2]{
   \noindent%
   \textbf{Restatement of #1.}
   {\em #2{}}%
}
\newcommand{\IntRange}[1]{\mleft\llbracket #1 \mright\rrbracket}
\newcommand{\IRX}[1]{\IntRange{#1}}%
\newcommand{\Term}[1]{\textsf{#1}}
\newcommand{\pA}{\Mh{p}}%
\newcommand{\pB}{\Mh{q}}%
\newcommand{\pC}{\Mh{u}}%
\newcommand{\PS}{\Mh{P}}%
\newcommand{\AccSet}{\mathcal{Z}}
\newcommand{\DFS}{\Term{DFS}\xspace}%
\newcommand{\ANN}{\Term{ANN}\xspace}%
\newcommand{\MST}{\Term{MST}\xspace}%
\newcommand{\LSH}{\Term{LSH}\xspace}%
\newcommand{\Tree}{\Mh{T}}%
\newcommand{\lso}{locality-sensitive ordering\xspace}%
\newcommand{\Lso}{Locality-sensitive ordering\xspace}%
\newcommand{\VFTS}{\Term{VFTS}\xspace}
\newcommand{\MetricSpace}{\mathcal{M}}
\newcommand{\MetricDist}{d}
\newcommand{\ZZ}{{\mathbb{Z}}}%
\newcommand{\Cube}{\Mh{\mathcal{C}}}%
\newcommand{\sidelengthX}[1]{\mathsf{sidelength}\pth{#1}}%
\newcommand{\Bmod}{\mathop{\raisebox{0.2ex}{\scalebox{.7}{\%}}}}%
\newcommand{\etal}{et~al.\xspace}
\newcommand{\si}[1]{#1}
\providecommand{\TPDF}[2]{\texorpdfstring{#1}{#2}}
\title{On Locality-Sensitive Orderings and Their Applications%
   \thanks{Submitted to the editors 02/22/19. A preliminary version of
      this paper appeared in I{TC}S 2019 \cite{chj-lso-19}.}}
\author{Timothy M. Chan%
   \thanks{Department of Computer Science, University of Illinois at
      Urbana-Champaign, Urbana, IL 61801 (\email{tmc@illinois.edu}).
      \funding{Work on this paper was partially supported by NSF AF
         award CCF-1814026.}}%
   \and%
   Sariel Har-Peled%
   \thanks{Department of Computer Science, University of Illinois at
      Urbana-Champaign, Urbana, IL 61801
      (\email{sariel@illinois.edu}). %
      \funding{Work on this paper was partially supported by NSF AF
         awards CCF-1421231 and CCF-1907400.}}%
   \and%
   Mitchell Jones%
   \thanks{Department of Computer Science, University of Illinois at
      Urbana-Champaign, Urbana, IL 61801
      (\email{mfjones2@illinois.edu}).}}
\title{On Locality-Sensitive Orderings and Their Applications%
   \thanks{A preliminary version of this paper appeared in I{TC}S 2019
      \cite{chj-lso-19}.}%
}%
\date{\today}
\author{%
   Timothy M. Chan%
   \thanks{%
      Department of Computer Science, University of Illinois at
      Urbana-Champaign, {\tt \{t{m}c}.  Work on this paper was
      partially supported by NSF AF award CCF-1814026.%
   }%
   \and%
   Sariel Har-Peled%
   \SarielThanks{Work on this paper was partially supported by NSF
      AF awards CCF-1421231 and CCF-1907400.}%
   \and%
   Mitchell Jones%
   \MitchellThanks{%
      Work on this paper was partially supported by NSF AF awards
      CCF-1421231 and CCF-1907400.%
   }%
}%
\begin{document}

\maketitle

\begin{abstract}
    For any constant $d$ and parameter $\eps \in (0,1/2]$, we show the
    existence of (roughly) $1/\eps^d$ orderings on the unit cube
    $[0,1)^d$, such that for any two points $\pt,\ptq\in [0,1)^d$
    close together under the Euclidean metric, there is a linear
    ordering in which all points between $\pt$ and $\ptq$ in the
    ordering are ``close'' to $\pt$ or $\ptq$.  More precisely, the
    only points that could lie between $\pt$ and $\ptq$ in the
    ordering are points with Euclidean distance at most
    $\eps\distY{\pt}{\ptq}$ from either $\pt$ or $\ptq$. These
    orderings are extensions of the \Zorder, and they can be
    efficiently computed.

    Functionally, the orderings can be thought of as a replacement to
    quadtrees and related structures (like well-separated pair
    decompositions). We use such orderings to obtain surprisingly
    simple algorithms for a number of basic problems in
    low-dimensional computational geometry, including
    \begin{enumerate*}[label=(\roman*)]
        \item dynamic approximate bichromatic closest pair,
        \item dynamic spanners,
        \item dynamic approximate minimum spanning trees,
        \item static and dynamic fault-tolerant spanners, and
        \item approximate nearest neighbor search.
    \end{enumerate*}
\end{abstract}

\SICOMPVer{
\begin{keywords}
    Approximation algorithms, data structures, computational geometry.
\end{keywords}

\begin{AMS}
  68W25, 68P05
\end{AMS}
}

\section{Preface}

In this paper, we describe a technique that leads to new, simpler
algorithms for a number of fundamental proximity problems in
low-dimensional Euclidean spaces.

Given data, having an ordering over it is quite useful---it enables
one to sort it, store it, and search it efficiently, among other
things. Such an order is less natural for points in the plane (or in
higher dimensions). One way to impose such orders is by using
bijective mappings from the plane to the line (which has a natural
order, and thus endows the plane with an order). Such mappings, known
as space-filling curves, were discovered in 1890 by Peano
\cite{p-sucqr-1890}. (See also the book by Sagan \cite{s-sfc-94} for
more information on space-filling curves.) For computational purposes,
the \Zorder, a somewhat inferior space-filling curve, is the easiest
to implement as it is easily computed by interleaving the bits of the
$x$ and $y$ coordinates.

A natural property one desires in an ordering of the plane is that it
preserves locality---points that are close together geometrically
remain close in the resulting ordering. Unfortunately, no
mapping/ordering can have this property universally, as the topology
of the line and the plane are fundamentally different. Nevertheless,
the \Zorder already has some nice locality properties---it maps certain
squares to intervals on the real line, and these squares forms grids
that cover the unit square. Furthermore, these grids are universal, in
the sense that there is a grid for any desired resolution.

To get better locality properties, one has to use more orders. It is
known that if one uses three orders in the plane (which is the result
of shifting the plane before applying the \Zorder), then for any axis
parallel square $\Cube$ (inside the unit square), there exists a
square $\Cube'$ that contains $\Cube$, such that $\Cube'$ is only
slightly bigger than $\Cube$, and one of the three orders maps
$\Cube'$ to an interval.

Our purpose here is to get an even stronger locality property, which
requires a larger collection of orderings. Specifically,
consider two points $\pt, \pt' \in [0,1]^2$. The desired property is
that there are two squares $\Cube$ and $\Cube'$, and an order
$\order$ in the collection, with the following properties:
\begin{paraenumi}
    \item $\pt \in \Cube$ and $\pt' \in \Cube'$,
    \item the diameters of $\Cube$ and $\Cube'$ are only an
    $\eps$-fraction of the distance between $\pt$ and $\pt'$,
    \item $\Cube$ and $\Cube'$ are mapped to two intervals on the real
    line by $\order$, and
    \item these two intervals are adjacent.
\end{paraenumi}
Such an ordering $\order$ with the desired properties is illustrated
in \figref{orderly}.
\begin{figure}[t]
    \centerline{%
       \includegraphics[scale=0.8]{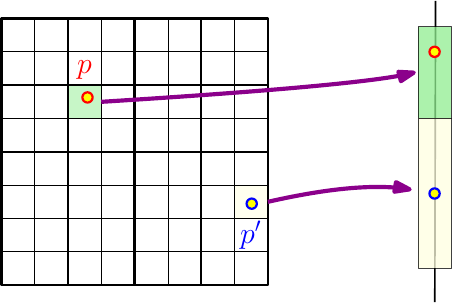}%
    }%
    \caption{}
    \figlab{orderly}
\end{figure}

For algorithmic applications, this collection of orders needs to be
small, and it needs to be easily computable. Surprisingly, we show
that the desired collection of orders has size that depends only on
$\eps$, and these orders can be easily computed.

To see why having such a collection of orders is so useful, consider
the problem of computing the closest pair of points in a given set of
points $\PS$. Every order in the collection induces an ordering of
$\PS$. Furthermore, the closest pair of points are going to be
adjacent in one of these orders, and as such can be readily computed
by considering all consecutive pairs of points in the ordering (the
number of such pairs is linear). Furthermore, using balanced binary
search trees, it is easy to maintain each ordered set under insertions
and deletions.  Therefore, one can maintain the closest pair of points
by storing $\PS$ in such a data structure for each of the orderings.
As a result, a dynamic problem that might seem in advance somewhat
challenging reduces (essentially) to the mundane task of maintaining
ordered sets under insertions and deletions.

\NotSICOMPVer{ \bigskip%
   \vspace{-2\lineskip}%
}

\begin{figure}
    \includegraphics[scale=0.5,page=1]{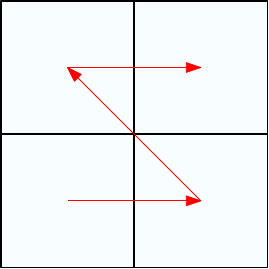}%
    \hfill%
    \includegraphics[scale=0.5,page=2]{figs/qdfs_2}%
    \hfill%
    \includegraphics[scale=0.5,page=3]{figs/qdfs_2}%
    \hfill%
    \includegraphics[scale=0.5,page=4]{figs/qdfs_2}%
    \hfill%
    \includegraphics[scale=0.5,page=5]{figs/qdfs_2}%

    \smallskip%
    
    \includegraphics[scale=0.5,page=1]{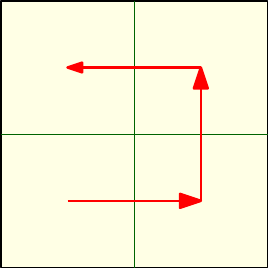}%
    \hfill%
    \includegraphics[scale=0.5,page=2]{figs/qdfs_3}%
    \hfill%
    \includegraphics[scale=0.5,page=3]{figs/qdfs_3}%
    \hfill%
    \includegraphics[scale=0.5,page=4]{figs/qdfs_3}%
    \hfill%
    \includegraphics[scale=0.5,page=5]{figs/qdfs_3}%
    \caption{Changing the order in which a \DFS visits the children of
    a quadtree node induces a different ordering of the underlying
    square (and produces different space filling curves). The top row
    shows the \Zorder (or \ZBorder), and the bottom row shows the
    {\protect\URorder{}}.}
    \figlab{zig:zag}
\end{figure}

\section{Introduction}

\paragraph*{Quadtrees and \Zorder} %
Consider a point set $\PS \subseteq [0,1)^2$, its quadtree, and a 
depth-first search (\DFS) traversal of this quadtree. One can order
the points of $\PS$ according to this traversal, resulting in some
ordering $\prec$ of the underlying set $[0,1)^2$. The relation 
$\prec$ is the ordering along some space filling mapping.

One particular ordering of interest is the \emphi{\Zorder{}}.
Conceptually speaking, the \Zorder{} can be thought of as a \DFS of
the quadtree over $[0,1)^2$, where the children of each node in the
quadtree are always visited in the same p{r}e-defined order (see
\figref{zig:zag}). The \Zorder is a total ordering over the points in
$[0,1)^2$, and can be formally defined by a bijection $z$ from the
unit interval $[0,1)$ to the unit square $[0,1)^2$. Given a real
number $\alpha \in [0,1)$, with the binary expansion
$\alpha=0.x_1 x_2 x_3 \ldots$ (i.e.,
$\alpha = \sum_{i=1}^\infty x_i 2^{-i}$), the \Zorder mapping of
$\alpha$ is the point
$z(\alpha) = (0.x_2x_4x_6\ldots, 0.x_1x_3x_5\ldots )$. 
We note that the \Zorder mapping $z$ is not continuous. Nevertheless,
the \Zorder mapping has the advantage of being easy to define.  In
particular, computing the \Zorder or its inverse is quite easy, if one
is allowed bit{}wise-logical operations---in particular, the ability to
compute compressed quadtrees efficiently is possible only if such
operations are available \cite{h-gaa-11}. The approach extends to 
higher constant dimensions. 

The idea of using the \Zorder can be traced back to the work of Morton
\cite{m-c-66}, and it is widely used in databases and seems to improve
performance in practice \cite{kf-oprt-93}.  Once comparison by \Zorder
is available, building a compressed quadtree is no more than storing
the points according to the \Zorder, and this yields simple data
structures for various problems. For example, Liao
\etal~\cite{lll-hdsssfc-01} and Chan
\cite{c-cppsr-02,c-miann-06,c-wspdl-08} applied the \Zorder to obtain
simple efficient algorithms for approximate nearest neighbor search
and related problems.

\paragraph*{Shifting}
The \Zorder (and quadtrees) does not preserve distance. That is, two
points that are far away might be mapped to two close-together points,
and vice versa. This problem is even apparent when using a grid, where
points that are close together get separated into different grid
cells. One way to get around this problem is to shift the grid
(deterministically or randomly) \cite{hm-ascpp-85}. The same approach
works for quadtrees---one can shift the quadtree constructed for a
point set several times such that for any pair of points in the
quadtree, there will be a shift where the two points are in a cell of
diameter that is $\Od(1)$ times their distance. (Throughout,
we use the $\Od$ notation to hide constants that depend on $d$.
Similarly, $\Oe$ hides dependencies on $\eps$.) Improving an earlier
work by Bern \cite{b-acpqh-93}, Chan \cite{c-annqr-98} showed that
$2\lceil d/2\rceil+1$ deterministic shifts are enough in $d$
dimensions (a proof is reproduced in \apndref{shifting}).  A somewhat
similar shifting scheme was also suggested by Feige and Krauthgamer
\cite{fk-sfpta-97}.  Random shifting of quadtrees underlines, for
example, the approximation algorithm by Arora for Euclidean TSP
\cite{a-ptase-98}.

By combining \Zorder with shifting, both Chan \cite{c-cppsr-02}
and Liao \etal~\cite{lll-hdsssfc-01} observed an
extremely simple data structure for $\Od(1)$-\si{appr}\-\si{oximate} 
nearest neighbor search in constant dimensions: just store the points in
\Zorder for each of the $2\lceil d/2\rceil+1$ shifts; given a query
point $q$, find the successor and predecessor of $q$ in the \Zorder by
binary search for each of the shifts, and return the closest point
found.  The data structure can be easily made dynamic to support
insertions and deletions of points, and can also be adapted to find
$\Od(1)$-approximate bichromatic closest pairs.

For approximate nearest neighbor (\ANN) search, the $\Od(1)$
approximation factor can be reduced to $1+\eps$ for any fixed
$\eps>0$, though the query algorithm becomes more involved
\cite{c-cppsr-02} and unfortunately cannot be adapted to compute
$(1+\eps)$-approximate bichromatic closest pairs dynamically.  (In the
monochromatic case, however, the approach can be adapted to find
\emph{exact} closest pairs, by considering $O_d(1)$ successors and
predecessors of each point~\cite{c-cppsr-02}.)

For other proximity-related problems such as spanners and approximate
minimum spanning trees (\MST), this approach does not seem to work as
well: for example, the static algorithms in \cite{c-wspdl-08}, which
use the \Zorder, still requires explicit constructions of compressed
quadtrees and are not easily dynamizable.

\paragraph*{Main new technique: \Lso{}s}
For any given $\epsA > 0$, we show that there is a family of
$O_d((1/\epsA^{d}) \log (1/\epsA))$ orderings of $[0,1)^d$ with the
following property: For any $\pt, \ptq \in [0,1)^d$, there is an
ordering in the family such that all points lying between $\pt$ and
$\ptq$ in this ordering are within distance at most
$\epsA \distY{\pt}{\ptq}$ from either $\pt$ or $\ptq$ (where 
$\| \cdot \|$ is the standard Euclidean norm).
The order between two points can be determined efficiently using some
bitwise-logical operations.  See \thmref{lso}.  We refer to these as
\emphi{\lso{}s}.  They generalize the previous construction of
$2\lceil d/2\rceil+1$ shifted copies of the \Zorder, which guarantees
the stated property only for a large specific constant (equivalent to
setting $\epsA \approx d^{3/2}$).  The new refined property ensures,
for example, that a $(1+\eps)$-approximate nearest neighbor of a point
$q$ can be found among the immediate predecessors and successors of
$q$ in these orderings.
 
\paragraph*{Applications}
\Lso{}s immediately lead to simple algorithms for a number of
problems, as listed below.  Many of these results are significant
simplification of previous work; some of the results are new.
\NotSICOMPVer{\medskip}%
\begin{compactenumA}
    \NotSICOMPVer{\medskip}
    \item \textsf{Approximate bichromatic closest pair.} %
    \thmref{bichromatic:closest:pair} presents a data structure that
    maintains a $(1+\eps)$-approximate closest bichromatic pair for
    two sets of points in $\Re^d$, with an update time of $\Ode(\log
    n)$, for any fixed $\eps>0$ (the hidden factors depending on 
    $\eps$ are proportional to $(1/\eps^{d}) \log^2(1/\eps)$).
    Previously, a general technique of Eppstein \cite{e-demst-95} can
    be applied in conjunction with a dynamic data structure for \ANN,
    but the amortized update time increases by two $\log n$ factors.

    \NotSICOMPVer{\medskip}
    \item \textsf{Dynamic spanners.} %
    For a parameter $t \geq 1$ and a set of points $\pset$ in $\Re^d$,
    a graph $\graph = (\pset, \graphedges)$ is a $t$-spanner for
    $\pset$ if for all $\pt, \ptq \in \pset$, there is a $\pt$-$\ptq$
    path in $\graph$ of length at most $t \distY{\pt}{\ptq}$.  Static
    algorithms for spanners have been extensively studied in
    computational geometry. The dynamic problem appears tougher, and
    has also received much attention (see \tblref{d:y:n:spanner}).  We
    obtain a very simple data structure for maintaining dynamic
    $(1+\eps)$-spanners in Euclidean space with an update (insertion
    and deletion) time of $\Ode(\log n)$ and having $\Ode(n)$ edges in
    total, for any fixed $\eps>0$. See \thmref{dynamic:spanners}.
    Although Gottlieb and Roditty \cite{gr-odsdm-08} have previously
    obtained the same update time $\Ode(\log n)$, their
    method requires much more intricate details. (Note that Gottlieb
    and Roditty's method more generally applies to spaces with bounded
    doubling dimension, but no simpler methods have been reported in
    the Euclidean setting.)

    \NotSICOMPVer{\medskip}
    \item \textsf{Dynamic approximate minimum spanning trees.}  As is
    well-known~\cite{ck-fasgg-93,h-gaa-11}, a $(1+\eps)$-approximate
    Euclidean \MST of a point set $\pset$ can be computed from the
    \MST of a $(1+\eps)$-spanner of $\pset$.  In our dynamic spanner
    (and also Gottlieb and Roditty's method \cite{gr-odsdm-08}), each
    insertion/deletion of a point causes $\Ode(1)$ edge updates to the
    graph.  Immediately, we thus obtain a dynamic data structure for
    maintaining a $(1+\eps)$-approximate Euclidean \MST, with update
    time (ignoring dependencies on $d$ and $\eps$) equal to that for 
    the dynamic graph \MST problem, which is currently
    $O(\log^4n/\log\log n)$ with amortization~\cite{hrw-ffdms-15}.

    \NotSICOMPVer{\medskip}
    \item \textsf{Static and dynamic vertex-fault-tolerant spanners.}
    For parameters $k, t \geq 1$ and a set of points $\pset$ in
    $\Re^d$, a $k$-vertex-fault-tolerant $t$-spanner is a graph
    $\graph$ which is a $t$-spanner and for any
    $\pset' \subseteq \pset$ of size at most $k$, the graph
    $\graph \setminus \pset'$ remains a $t$-spanner for
    $\pset \setminus \pset'$.  Fault-tolerant spanners have been
    extensively studied (see \tblref{fault}).  \Lso{}s lead to a very
    simple construction for $k$-vertex-fault-tolerant
    $(1+\eps)$-spanners, with $\Ode(kn)$ edges, maximum degree 
    $\Ode(k)$, and $\Ode(n\log n + kn)$ running time.  See
    \thmref{ft:spanners}.  Although this result was known before, all
    previous constructions (including suboptimal ones), from
    Levcopoulos \etal's \cite{lns-eacft-98} to Solomon's
    work~\cite{s-fhphc-14}, as listed in \tblref{fault}, require
    intricate details. It is remarkable how effortlessly we achieve
    optimal $\Ode(k)$ degree, compared to the previous methods.
    (Note, however, that some of the more recent previous
    constructions more generally apply to spaces with bounded doubling
    dimension, and some also achieve good bounds on other parameters
    such as the total weight and the hop-diameter.)
    
    Our algorithm can be easily made dynamic, with $\Ode(\log n+k)$
    update time.  No previous results on dynamic fault-tolerant
    spanners were known.

    \NotSICOMPVer{\medskip}
    \item \textsf{Approximate nearest neighbors.} %
    \SICOMPVer{\sloppy}
    \Lso{}s lead to a simple dynamic data structure for
    $(1+\eps)$-approximate nearest neighbor search with $\Ode(\log n)$
    time per update/query.  While this result is not new
    \cite{c-cppsr-02}, we emphasize that the query algorithm is the
    simplest so far---it is just a binary search in the orderings
    maintained.
\end{compactenumA}

\begin{table}
    \centering
    \begin{tabular}{*{3}{l}}
      reference
      & insertion time
      & deletion time
      \\\hline
      Roditty \cite{r-fdgs-12}
      & $\log n$
      & $n^{1/3}\log^{O(1)}n$
        \phantom{$2^{2^{2^2}}$}\\      
      Gottlieb and Roditty \cite{gr-iafdg-08} & $\log^2n$ & $\log^3n$\\
      Gottlieb and Roditty \cite{gr-odsdm-08} & $\log n$ & $\log n$\\
      \thmref{dynamic:spanners} & $\log n$ & $\log n$
    \end{tabular}
    \caption{Previous work and our result on dynamic
    $(1+\eps)$-spanners in $\Re^d$. All bounds are of the form
    $\Ode(\,\cdot\,)$ (the hidden dependencies on $\eps$ are
    $1/\eps^{O(d)}$).}%
    \tbllab{d:y:n:spanner}
\end{table}

\begin{table}
    \centering
    \SICOMPVer{\small}
    \begin{tabular}{*{4}{l}}
      reference
      & \# edges
      & degree
      & running time
      \\
      \hline
      Levcopoulos \etal \cite{lns-eacft-98}
      & $2^{O(k)}n$
      & $2^{O(k)}$
      & $n\log n + 2^{O(k)}n$ \phantom{$2^{2^{2^2}}$}
      \\
      & $k^2 n$ & unbounded & $n\log n + k^2 n$\\
      & $kn\log n$ &  unbounded & $kn\log n$\\
      Lukovszki \cite{l-nrftg-99,l-nrgsa-99}
      & $kn$ & $k^2$ & $n\log^{d-1}n + kn\log\log n$\\
      Czumaj and Zhao \cite{cz-ftgs-04}
      & $kn$
      & $k$
      & $kn\log^d n + k^2 n\log k$\\
      H. Chan \etal \cite{clns-ndsbs-15}
      & $k^2 n$
      & $k^2$
      & $n\log n + k^2 n$\\
      Kapoor and Li~\cite{kl-ecpd-13}/Solomon \cite{s-fhphc-14} 
      & $kn$ & $k$ & $n\log n + kn$\\
      \thmref{ft:spanners}
      & $kn$ & $k$ & $n\log n + kn$
    \end{tabular}
    \caption{Previous work and our result on static
    $k$-vertex-fault-tolerant $(1+\eps)$-spanners in $\Re^d$. 
    All bounds are of the form $\Ode(\,\cdot\,)$ (the hidden 
    dependencies on $\eps$ are $1/\eps^{O(d)}$).}%
    \tbllab{fault}
\end{table}

\paragraph*{Computational models and assumptions}
The model of computation we have assumed is a unit-cost real RAM,
supporting standard arithmetic operations and comparisons (but no
floor function), augmented with bitwise-logical operations
(bitwise-exclusive-or and bitwise-and), which are commonly available
in programming languages (and in reality are cheaper than some
arithmetic operations like multiplication).

If we assume that input coordinates are integers bounded by $U$ and
instead work in the word RAM model with $(\log U)$-bit words
($U\ge n$), then our approach can actually yield \emph{sublogarithmic}
query/update time.  For example, we can achieve $\Ode(\log\log U)$
expected time for dynamic approximate bichromatic closest pair,
dynamic spanners, and dynamic \ANN, by replacing binary search with
van Emde Boas trees~\cite{e-poflt-77}.  Sublogarithmic algorithms were
known before for dynamic \ANN~\cite{c-cppsr-02}, but ours is the first
sublogarithmic result for dynamic $(1+\eps)$-spanners. 
Our results also answers the open problem of dynamic
$(1+\eps)$-approximate bichromatic closest pair in sublogarithmic
time, originally posed by Chan and Skrepetos \cite{cs-ddsah-17}. 

Throughout, we assume (without loss of generality) that
$\eps$ is a power of $2$; that is, $\eps= 2^{-E}$ for some positive
integer $E$.%

\section{\Lso{}s}

\subsection{Grids and orderings}

\begin{defn:unnumbered}
    For a set $X$, consider a total order (or ordering) $\prec$ on the
    elements of $X$. 
    Two elements $x,y \in X$ are \emphi{adjacent} if there is no
    element $z \in X$, such that $x \prec z \prec y$ or
    $y \prec z \prec x$.

    Given two elements $x,y \in X$, such that $x \prec y$, the
    \emphi{interval} $[x,y)$ is the set
    \begin{math}
        [x,y) = \brc{x} \cup \Set{z \in X}{x \prec z \prec y}.
    \end{math}
\end{defn:unnumbered}

The following is well known, and goes back to a work by Walecki in the
$19$\th century \cite{a-wwc-08}. We include a proof in
\apndref{many:orderings} for the sake of completeness.
(If we don't care about the constant factor in the number of orderings, there are other straightforward alternative proofs.)

\SaveContent{\LemmaManyOrderings}{%
   For $n$ elements $\brc{0, \ldots, n-1}$, there is a set $\OrdChar$
   of $\ceil{n/2}$ orderings of the elements, such that, for all
   $i,j \in \brc{0, \ldots, n-1}$, there exists an ordering
   $\order \in \OrdChar$ in which $i$ and $j$ are adjacent.%
}

\begin{lemma}
    \lemlab{many:orderings}%
    \LemmaManyOrderings{}
\end{lemma}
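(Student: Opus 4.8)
The plan is to recast the statement graph-theoretically: an ordering of $\brc{0,\ldots,n-1}$ is the same as a Hamiltonian path of the complete graph $K_n$ on the vertex set $\brc{0,\ldots,n-1}$, and two elements are adjacent in the ordering exactly when they form an edge of this path. So it suffices to produce $\ceil{n/2}$ Hamiltonian paths of $K_n$ whose edge sets together contain all $\binom{n}{2}$ edges. I would use the classical ``zig-zag'' paths: identify the vertex set with $\ZZ_n$, and for each $k \in \brc{0,1,\ldots,\ceil{n/2}-1}$ let $\order_k$ be the ordering listing the elements as
\[
    k,\quad k+1,\quad k-1,\quad k+2,\quad k-2,\quad k+3,\quad \ldots
\]
(arithmetic modulo $n$); equivalently, $\order_k$ places $(k-j)\bmod n$ at position $2j$ and $(k+j+1)\bmod n$ at position $2j+1$. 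Set $\OrdChar = \brc{\order_0,\ldots,\order_{\ceil{n/2}-1}}$, which has $\ceil{n/2}$ elements.

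First I would check that each $\order_k$ is a genuine ordering, i.e., that the displayed sequence is a permutation of $\ZZ_n$: the elements in even positions form the contiguous arc $\Set{(k-j) \bmod n}{0\le j\le \floor{(n-1)/2}}$ and those in odd positions form the arc $\Set{(k+1+j) \bmod n}{0\le j \le \ceil{(n-1)/2}-1}$, these two arcs are disjoint, and the sum of their sizes is $\floor{(n-1)/2}+\ceil{(n-1)/2}+1 = n$, so together they are all of $\ZZ_n$.

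The heart of the argument is to classify the adjacent pairs (edges) of $\order_k$ by the residue modulo $n$ of the sum of their two elements. A consecutive pair of the form $\brc{k-j,\;k+j+1}$ has sum $\equiv 2k+1$, and a pair of the form $\brc{k+j+1,\;k-j-1}$ has sum $\equiv 2k$; hence every edge of $\order_k$ lies in one of the two ``sum-classes'' $2k\bmod n$ or $(2k+1)\bmod n$. I would then show that $\order_k$ contains \emph{all} edges of each of these two classes. To do this, I would count: the number of edges of $\order_k$ in the class $(2k+1)\bmod n$ equals the total number of pairs $\brc{u,v}$ with $u\neq v$ and $u+v\equiv 2k+1 \pmod n$, and similarly for the class $2k\bmod n$; since the edges of a simple path are pairwise distinct, these edges must then be exactly all the pairs in the respective class. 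This counting is the main obstacle: it requires a small case split on the parity of $n$ (for odd $n$ every sum-class has $(n-1)/2$ pairs, whereas for even $n$ the odd-sum classes have $n/2$ pairs and the even-sum classes have $n/2 - 1$), together with a check that the parametrizations of the edges of $\order_k$ and of the pairs in a sum-class line up.

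Finally, I would note that as $k$ ranges over $\brc{0,\ldots,\ceil{n/2}-1}$ the pairs $\brc{2k\bmod n,\;(2k+1)\bmod n}$ cover all of $\ZZ_n$ (for even $n$ they tile $\ZZ_n$; for odd $n$ they cover it with one repeat, at the residue $0$). Therefore, given any $i,j\in\brc{0,\ldots,n-1}$, pick $k$ with $i+j\equiv 2k$ or $i+j\equiv 2k+1 \pmod n$; then $\brc{i,j}$ is an edge of $\order_k$, i.e., $i$ and $j$ are adjacent in $\order_k \in \OrdChar$, as required.
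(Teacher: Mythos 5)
Your proposal is correct, and it is the same Walecki decomposition the paper uses: your zig-zag path $k, k+1, k-1, k+2, \ldots$ is exactly the union of the two matchings $M_k \cup M_k'$ from the paper's polygon picture, with your ``sum classes'' $2k$ and $2k+1 \pmod n$ being the arithmetic counterpart of the paper's parallel slope classes. The only presentational differences are that you verify the construction by a counting argument rather than geometrically, and you handle odd $n$ directly instead of padding to $n+1$ and deleting a symbol; both variants check out.
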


\begin{defn}
    \deflab{grids}%
    Consider an axis-parallel cube $\Cube \subseteq \Re^d$ with side
    length $\ell$. Partitioning it uniformly into a
    $t \times t\times \cdots \times t$ grid $\grid$ creates the
    \emphi{$t$-grid} of $\Cube$.  The grid $\grid$ is a set of $t^d$
    identically sized cubes with side length $\ell/t$.

    For a cube $\cell \subseteq \Re^d$, its \emphi{diameter} is
    $\diamX{\cell} = \sidelengthX{\cell} \sqrt{d}$.
\end{defn}

\begin{figure*}[t]
    \centering \hspace*{\fill}
    \includegraphics[page=1]{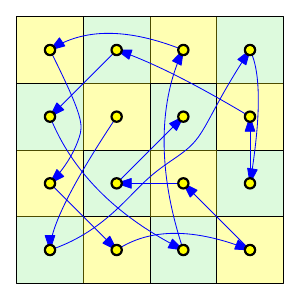} \hfill
    \includegraphics[page=2]{figs/traversal} \hspace*{\fill}
    \caption{One ordering of a set of cells.}
    \figlab{eg:ordering}
\end{figure*}

By \lemref{many:orderings} we obtain the following result.

\begin{corollary}
    \corlab{grid:orderings}%
    For a $t$-grid $\grid$ of an axis-parallel cube
    $\Cube \subseteq \Re^d$, there is a set $\OrdY{t}{d}$ of $O(t^d)$
    orderings, such that for any $\cell_1, \cell_2 \in \grid$, there
    exists an order $\order \in \OrdY{t}{d}$ where $\cell_1$ and
    $\cell_2$ are adjacent in $\order$.
\end{corollary}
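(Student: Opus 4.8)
The plan is to apply \lemref{many:orderings} essentially verbatim, after identifying the cells of the grid with an abstract index set of the right size. First I would invoke \defref{grids} to observe that the $t$-grid $\grid$ of $\Cube$ consists of exactly $t^d$ cells, and fix an arbitrary bijection $\varphi$ between these cells and the set $\brc{0, \ldots, t^d - 1}$. Next, I would apply \lemref{many:orderings} with $n = t^d$, obtaining a family $\OrdChar$ of $\ceil{t^d/2}$ orderings of $\brc{0, \ldots, t^d-1}$ with the property that every pair of indices is adjacent in at least one ordering of $\OrdChar$.

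Then I would pull each ordering of $\OrdChar$ back through $\varphi$: for an ordering $\order$ of the index set, declare $\cell_1 \prec \cell_2$ iff $\varphi(\cell_1) \prec_\order \varphi(\cell_2)$. This yields a family $\OrdY{t}{d}$ of orderings of the cells of $\grid$ with $\cardin{\OrdY{t}{d}} = \ceil{t^d/2}$. Since adjacency is a purely order-theoretic notion (it only refers to the absence of an element strictly between two given ones), it is preserved under this relabeling; hence for any $\cell_1, \cell_2 \in \grid$ the indices $\varphi(\cell_1), \varphi(\cell_2)$ are adjacent in some $\order \in \OrdChar$, and therefore $\cell_1, \cell_2$ are adjacent in the corresponding ordering of $\OrdY{t}{d}$. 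Finally I would note $\ceil{t^d/2} = O(t^d)$, giving the stated size bound.

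I do not expect any real obstacle here: the corollary is a direct transcription of \lemref{many:orderings}, and the only point worth making explicit is that the construction uses no geometry of the grid whatsoever — any collection of $t^d$ objects would do equally well, and the geometric content only enters later when these cell-orderings are stitched together into orderings of the continuous cube $[0,1)^d$.
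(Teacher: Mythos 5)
Your proposal is exactly the argument the paper intends: the corollary is stated as an immediate consequence of \lemref{many:orderings} applied with $n = t^d$, identifying the cells of $\grid$ with $\brc{0,\ldots,t^d-1}$ via an arbitrary bijection. The paper gives no further detail, and your explicit remark that adjacency is purely order-theoretic (so relabeling preserves it) is correct and fills in the only step worth writing down.
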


\subsection{\TPDF{$\eps$}{epsilon}-Quadtrees}

\begin{defn}
    An \emphi{$\eps$-quadtree} $\qte$ is a quadtree-like structure,
    built on a cube with side length $\ell$, where each cell is 
    partitioned into a $(1/\eps)$-grid.  The construction then
    continues recursively into each grid cell of interest.  As such, a
    node in this tree has up to $1/\eps^d$ children, and a node at
    level $i \geq 0$ has an associated cube of side length
    $\ell \eps^i$. When $\eps = 1/2$, this is a regular quadtree.
\end{defn}

\begin{lemma}
    \lemlab{reg:to:eps:quad}%
    Let $E > 0$ be an integer, $\eps = 2^{-E}$, and $\qt$ be a
    regular quadtree over $[0, 2)^d$.  There are $\eps$-quadtrees
    $\qte^0, \ldots, \qte^{E-1}$, such that the collection of cells 
    at each level in $\qt$ is contained in exactly one of
    these $\eps$-quadtrees.
\end{lemma}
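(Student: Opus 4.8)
The plan is to split the levels of $\qt$ according to their depth modulo $E$, and to realize each resulting family of grids as the levels of a single $\eps$-quadtree built on a (possibly enlarged) dyadic root cube containing $[0,2)^d$.

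Recall that level $k\ge 0$ of the regular quadtree $\qt$ over $[0,2)^d$ is the uniform partition of $[0,2)^d$ into axis-parallel cubes of side length $2^{1-k}$, while level $i\ge 0$ of an $\eps$-quadtree built on a cube of side $\ell$ is the uniform partition of that cube into cubes of side $\ell\eps^{\,i}=\ell\,2^{-iE}$. For $j=1,\dots,E$ I would take $\qte^{j}$ to be the $\eps$-quadtree built on the cube $\Cube_j=[0,\,2^{\,1+E-j})^d$; then $[0,2)^d\subseteq\Cube_j$, with $\Cube_E=[0,2)^d$. Matching side lengths, the cells of level $i$ of $\qte^{j}$ have side $2^{\,1+E-j-iE}$, which equals $2^{1-k}$ exactly when $k=j+(i-1)E$. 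Thus $\qte^{E}$ (the $\eps$-quadtree on $[0,2)^d$ itself) realizes the levels $k=0,E,2E,\dots$ of $\qt$ at its depths $0,1,2,\dots$, while for $1\le j\le E-1$ the tree $\qte^{j}$ realizes the levels $k=j,j+E,j+2E,\dots$ of $\qt$ at its depths $1,2,3,\dots$. As $j$ runs over $\{1,\dots,E\}$ these residue classes cover all of $\ZZ/E\ZZ$, so every level $k\ge 0$ of $\qt$ is realized by exactly one of $\qte^{1},\dots,\qte^{E}$.

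It remains to check that equal side lengths actually force the cells to coincide, i.e.\ that a level of $\qt$ literally \emph{is} the restriction to $[0,2)^d$ of a level of the corresponding $\qte^{j}$. This follows from alignment: the level-$k$ grid of $\qt$ and the level-$i$ grid of $\qte^{j}$ are both axis-parallel, anchored at the origin, and have the same cell side $2^{1-k}$; since $[0,2)^d\subseteq\Cube_j$ and both $2/2^{1-k}=2^{k}$ and $2^{\,1+E-j}/2^{1-k}=2^{\,E-j+k}$ are (powers of two, hence) integers, the $2^{kd}$ cells of $\qt$ at level $k$ are precisely the cells of $\qte^{j}$ at level $i$ that lie inside $[0,2)^d$, and in particular each is a node of $\qte^{j}$. (If one prefers $\qt$ to have only finitely many levels, truncate each $\qte^{j}$ at the appropriate depth.) This proves the lemma.

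This is essentially a bookkeeping statement, so I do not expect a genuinely hard step; the only thing to get right is the choice of the enlarged root cubes $\Cube_j$ of side $2^{1+E-j}$ (rather than $2$), so that the successive $(1/\eps)$-grid refinements of $\qte^{j}$ land exactly on the levels of $\qt$ — together with the exponent arithmetic and the boundary case $j=E$, which recovers level $0$.
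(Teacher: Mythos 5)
Your proof is correct and follows essentially the same route as the paper's: split the levels of $\qt$ by their depth modulo $E$ and realize each residue class as the levels of an $\eps$-quadtree rooted at a suitably enlarged power-of-two cube containing $[0,2)^d$ (your $[0,2^{1+E-j})^d$ versus the paper's $[0,2^{E-i})^d$, the factor-of-two difference being only an indexing convention for the side length of level $k$). Your extra paragraph checking that equal side length plus common anchoring at the origin forces the cells to coincide is a detail the paper leaves implicit, not a different argument.
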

\begin{proof}
    \SICOMPVer{\sloppy}
    
    For $i = 0, \ldots, E-1$, construct the $\eps$-quadtree $\qte^i$
    using the cube $\bigl[0,2^{E-i+1}\bigr)^d \supseteq [0,2)^d$ as the
    root. Now for $j \in \brc{0, \ldots, E-1}$, observe that the 
    collection of cells at levels $j, j + E, j + 2E, \ldots,$ of $\qt$
    will also be in the quadtree $\qte^{j}$. 
    Indeed, any node at level $j + \ell E$ in $\qt$
    corresponds to a cell of side length $2^{-(j + \ell E)+1}$. Now in
    the $(\ell+1)$\th level of quadtree $\qte^j$, this same node will 
    have side length $\eps^{\ell+1} 2^{E-j+1} = 2^{-(j + \ell E)+1}$.
\end{proof}

Consider an $\eps$-quadtree $\qte$. Every node has up to $1/\eps^d$
children. Consider any ordering $\order$ of 
$\brc{1, \ldots, 1/\eps^d}$. Conceptually speaking, this induces a 
\DFS of $\qte$ that always visits the children of a node in the order
specified by $\order$. This induces an
ordering on the points in the cube which is the root of
$\qte$. Indeed, for any two points, imagine storing them in an
$\eps$-quadtree---this implies that the two points are each stored in
their own leaf node, which contains no other point of interest.  Now,
independently of what other points are stored in the quadtree, this
\DFS traversal would visit these two points in the same order. This
can be viewed as a space filling curve (which is not continuous) which
maps a cube to an interval. This is a generalization of the \Zorder.
In particular, given a point set stored in $\qte$, and given $\order$,
one can conceptually order the points according to this \DFS
traversal, resulting in $1$-dimensional ordering of the points. We
denote the resulting ordering by $(\qte, \order)$. 

In \secref{compare}, we show that given $(\qte, \order)$, the order of
any two points in $[0,2)^d$ can be determined efficiently, and avoids
explicitly handling this \DFS traversal of $\qte$. Alternatively, the
\DFS on $\qte$ (according to $\order$) is implicitly defined by the
total ordering $(\qte, \order)$ of points in $[0,2)^d$.

\begin{defn}
    \deflab{o:set}%
    Let $\orderset$ be the set of all orderings of $[0,2)^d$, induced
    by picking one of the $\lg(1/\eps)$ trees of
    \lemref{reg:to:eps:quad}, together with an ordering
    $\order \in \OrdY{1/\eps}{d}$, as defined by
    \lemref{many:orderings}.  Each ordering in $\orderset$ is an
    \emphi{$\eps$-ordering}.
\end{defn}

Suppose there are two points which lie in a quadtree cell
that has diameter close to their distance. Formally, consider two
points $\pA, \pB \in [0,1)^d$, a parameter $\eps > 0$, such that 
$\pA, \pB$ are both contained in a cell $\cell$ of the regular
quadtree $\qt$ with $\diamX{\cell} \leq 2\distY{\pA}{\pB}$. 
Then, there is an $\eps$-quadtree $\qte$ that has $\cell$ as a node, and let
$\cell_\pA$ and $\cell_\pB$ be the two children of $\cell$ in $\qte$,
containing $\pA$ and $\pB$ respectively. Furthermore, there is an
ordering $\order \in \OrdY{1/\eps}{d}$, such that $\cell_\pA$ and
$\cell_\pB$ are adjacent. As such, the cube $\cell_\pA$ (resp.,
$\cell_\pB$) corresponds to an interval $[x, x')$ (resp., $[x', x'')$)
in the ordering $(\qte,\order)$, and these two intervals are adjacent.
In particular, this implies that all points lying between $\pA$ and
$\pB$ in $\order$ have distance at most $2 \eps \distY{\pA}{\pB}$ from
$\pA$ or $\pB$.

If the above statement were true for all pairs of points, then this
would imply the main result (\thmref{lso}). However, consider the case
when there are two points close together, but no appropriately sized
quadtree cell contains both $\pt$ and $\ptq$.  In other words, two
points that are close together might get separated by nodes that are
much bigger in the quadtree, and this would not provide the guarantee
of the main result. However, this issue can be resolved using
shifting. We need the following result of Chan 
\cite[Lemma 3.3]{c-annqr-98}---a proof is provided in 
\apndref{shifting}.

\SaveContent{\LemmaShiftingWorks}%
{%
   Consider any two points $\pA, \pB \in [0,1)^d$, and let $\qt$ be
   the infinite quadtree of $[0,2)^d$.  For $D= 2\ceil{d/2}$ and
   $i = 0, \ldots, D$, let $v_i = (i/(D+1), \ldots, i/(D+1))$. Then
   there exists an $i \in \brc{0, \ldots, D}$, such that $\pA + v_i$
   and $\pB + v_i$ are contained in a cell of $\qt$ with side length
   $\leq 2(D+1) \distY{\pA}{\pB}$.%
}%

\begin{lemma}
    \lemlab{shifting:works}%
    \LemmaShiftingWorks
\end{lemma}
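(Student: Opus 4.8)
The plan is to pick the ``right'' quadtree level as a function of $r = \distY{\pA}{\pB}$ and to argue that all but a few of the $D+1$ shifts put $\pA$ and $\pB$ into a common cell at that level. I would first dispose of the easy regime $r \ge 1/(D+1)$: since every coordinate of each $v_i$ lies in $[0,1)$, we always have $\pA + v_i,\, \pB + v_i \in [0,2)^d$, so the root cell $[0,2)^d$ — of side length $2 \le 2(D+1)r$ — already works for every $i$. Otherwise $r < 1/(D+1)$, and I let $s$ be the unique power of two in $\bigl((D+1)r,\; 2(D+1)r\bigr]$; such a power exists because the interval has the form $(x,2x]$, and $s \le 1$ because $2(D+1)r < 2$. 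Write $s = 2^{-t}$ with $t \ge 0$. It suffices to produce a shift $v_i$ for which $\pA + v_i$ and $\pB + v_i$ lie in a common cell of the side-$s$ grid aligned at the origin, since that cell is exactly a node of $\qt$ (the level whose cells have side $s = 2^{1-(t+1)}$), and $s \le 2(D+1)r$ as required.

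Call an index $i \in \brc{0,\dots,D}$ \emph{bad for coordinate $k$} if $\pA_k + v_{i,k}$ and $\pB_k + v_{i,k}$ round down to different multiples of $s$ (equivalently, a multiple of $s$ lies in the half-open interval of length $\delta_k := \cardin{\pA_k - \pB_k}$ spanned by these two values). An index is \emph{good} when it is bad for no coordinate, and then $\pA + v_i$ and $\pB + v_i$ share a cell. The heart of the argument is the claim that \emph{for each fixed $k$ at most one index is bad for $k$}. Granting this, at most $d$ indices are bad in total, and since $d \le 2\ceil{d/2} = D < D+1$, at least one index is good, which finishes the proof.

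To prove the claim, suppose $j < j'$ are both bad for $k$, and put $g = j' - j \in \brc{1,\dots,D}$. Badness gives integers $m, m'$ such that $sm$ and $sm'$ fall in the length-$\delta_k$ intervals positioned at $\pA_k + v_{j,k}$ and $\pA_k + v_{j',k}$ respectively; subtracting and using $v_{j',k} - v_{j,k} = g/(D+1)$ and $\delta_k \le r$ yields $\bigl| s(m'-m) - g/(D+1)\bigr| < \delta_k$, and multiplying by $D+1$ together with $(D+1)\delta_k \le (D+1)r < s$ gives
\[
   \Bigl| g - s(D+1)M \Bigr| < s, \qquad \text{where } M = m'-m \in \ZZ .
\]
But $g - s(D+1)M = 2^{-t}\bigl(2^{t}g - (D+1)M\bigr)$, and $2^{t}g - (D+1)M$ is a \emph{nonzero} integer: were it zero we would have $(D+1) \mid 2^{t}g$, and since $D+1 = 2\ceil{d/2}+1$ is odd — hence coprime to $2^{t}$ — this forces $(D+1) \mid g$, impossible for $1 \le g \le D$. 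Therefore $\bigl| g - s(D+1)M\bigr| \ge 2^{-t} = s$, contradicting the displayed inequality.

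The step that needs care — and the reason the shift count is $D+1$ with $D$ chosen \emph{even} — is exactly this last one: the argument depends on $D+1$ being odd, so that the grid spacing $s$ (a power of two) and the shift spacing $1/(D+1)$ are arithmetically incompatible enough that no two distinct shifts can both straddle a grid line in the same coordinate. Everything else — the existence of the power of two $s$ in $(x,2x]$, the half-open-interval bookkeeping in the definition of ``bad'', and the identification of a side-$s$ grid cell with a node of $\qt$ — is routine and I would only sketch it.
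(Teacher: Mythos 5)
Your proof is correct, and its overall architecture matches the paper's: pick the power of two $s$ (the paper's $\alpha$) in $\bigl((D+1)r,\,2(D+1)r\bigr]$, call a shift bad for a coordinate when a grid line of spacing $s$ separates the two shifted points in that coordinate, show that each coordinate rules out at most one of the $D+1$ shifts, and finish by a union bound over the $d\le D$ coordinates. Where you genuinely differ is in the proof of the key claim that at most one shift is bad per coordinate. The paper establishes this via \lemref{yey}, an inductive computation showing that the shifts reduced modulo $\alpha$ form the equally spaced set $\Set{\alpha i/(D+1)}{i=0,\dots,D}$, so a bad set of length $<\alpha/(D+1)$ (an interval, possibly wrapping around) catches at most one reduced shift. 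You instead argue by contradiction with a direct divisibility computation: two bad shifts at distance $g/(D+1)$ would give $\bigl|g - s(D+1)M\bigr| < s$ for some integer $M$, yet $2^{t}g-(D+1)M$ is a nonzero integer because $D+1$ is odd and $1\le g\le D$, forcing $\bigl|g - s(D+1)M\bigr|\ge s$. The two arguments encode the same fact (consecutive reduced shifts are exactly $s/(D+1)$ apart), but yours is shorter, avoids the auxiliary lemma, and pinpoints precisely where the oddness of $D+1$ is used; the paper's version buys an explicit description of the reduced shift set, which makes the interval bookkeeping more transparent. Your explicit handling of the regime $r\ge 1/(D+1)$ via the root cell, and your use of $d\le D$ without the paper's ``WLOG $d$ even'' normalization, are both fine.
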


\subsection{Comparing two points according to an %
   \TPDF{$\eps$}{epsilon}-ordering}
\seclab{compare}

We now show how to efficiently compare two points in $\pset$ according
to a given $\eps$-ordering $\order$ with a shift $v_i$.  The shift can
be added to the two points directly, and as such we can focus on 
comparing two points according to $\order$.

First, we show how to compare the $\msb$ of two numbers using only
bitwise-exclusive-or and bitwise-and operations. We remark that
\obsref{msb:compare} \itemref{m:s:b} is from Chan \cite{c-cppsr-02}.

\begin{observation}
    \obslab{msb:compare}%
    \SICOMPVer{\sloppy}%
    Let $\xor$ denote the bitwise-exclusive-or operator. Let
    $\msb(a) := -\floor{\lg a}$ to be the index of the most
    significant bit in the binary expansion of $a \in [0,2)$.  Given
    $a, b \in [0,2)$, one can compare the $\msb$ of two numbers using
    the following:
    \begin{compactenumA}[leftmargin=0.4in]
        \NotSICOMPVer{\smallskip}%
        \item \itemlab{m:s:b} $\msb(a) > \msb(b)$ if and only if
        $a < b$ and $a < a \xor b$.

        \NotSICOMPVer{\smallskip}%
        \item \itemlab{m:s:b:e:q} $\msb(a) = \msb(b)$ if and only if 
        $a \xor b \leq a \land b$, where $\land$ is the bitwise-and 
        operator.
    \end{compactenumA}
\end{observation}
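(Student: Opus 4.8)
The plan is to work directly with binary expansions. First I would fix, for each real in $[0,2)$, its canonical binary expansion — the one not ending in an infinite run of $1$'s — and write $a=\sum_{i\ge0}a_i2^{-i}$ and $b=\sum_{i\ge0}b_i2^{-i}$ with $a_i,b_i\in\zeroone$ (position $0$ being the units bit, since $a$ or $b$ may lie in $[1,2)$); this is the representation on which $\xor$ and $\land$ act bitwise, i.e.\ $(a\xor b)_i=a_i\xor b_i$ and $(a\land b)_i=a_i\land b_i$. With this convention $\msb(a)=-\floor{\lg a}$ is exactly the index of the first nonzero bit of $a$, and I would record the two elementary facts I use throughout: if $a_0=\dots=a_m=0$ then $a<2^{-m}$, and if $a_m=1$ then $a\ge2^{-m}$ (both follow from $\msb(a)\le m\iff a\ge2^{-m}$). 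The degenerate cases $a=0$ or $b=0$ I would dispose of at the start, using the convention $\msb(0)=\infty$.

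For part (A), write $j=\msb(a)$ and $k=\msb(b)$. For the forward implication, assume $j>k$: then $a$ is zero in positions $0,\dots,j-1$ while $b$ has a $1$ in position $k\le j-1$, so the leading bit of $b$ dominates, giving $a<2^{-(j-1)}\le2^{-k}\le b$; moreover $a\xor b$ agrees with $b$ in all positions $\le k$ (since $a$ vanishes there), so its first nonzero bit is at position $k$, whence $a\xor b\ge2^{-k}>a$. For the converse I would argue the contrapositive: if $j\le k$, then either $j<k$, in which case the forward implication applied to the pair $(b,a)$ gives $b<a$ (so $a<b$ fails), or $j=k$, in which case $a$ and $b$ both carry a $1$ in position $j$, so $a\xor b$ is zero in positions $0,\dots,j$, forcing $a\xor b<2^{-j}\le a$ (so $a<a\xor b$ fails). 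Either way the right-hand side of (A) fails.

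For part (B), I would argue by contradiction: if $\msb(a)\ne\msb(b)$, by symmetry take $j=\msb(a)<\msb(b)=k$; then $a$ has a $1$ in position $j$ but $b$ does not, and both are zero in positions $<j$, so $a\land b$ is zero in positions $0,\dots,j$ while $a\xor b$ has a $1$ in position $j$ — hence $a\xor b\ge2^{-j}>a\land b$, contradicting the hypothesis $a\xor b\le a\land b$. The only genuinely delicate point in the whole argument is the non-uniqueness of binary expansions at dyadic rationals (compounded by the units bit, since $a,b$ are only assumed to lie in $[0,2)$ rather than $[0,1)$); once the canonical expansion is pinned down so that ``index of the first nonzero bit'' coincides with $-\floor{\lg a}$, everything else is routine bookkeeping on bit positions, so I do not anticipate a real obstacle.
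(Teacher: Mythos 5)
Your proof is correct. The paper states \obsref{msb:compare} as an observation with no proof at all, so there is nothing to compare against; your bit-level argument (fixing the canonical expansion not ending in an infinite run of ones, identifying $\msb$ with the position of the first nonzero bit, and doing case analysis on $j=\msb(a)$ versus $k=\msb(b)$) is the standard and intended justification, and it correctly isolates the only delicate point, namely non-uniqueness of expansions at dyadic rationals. One pedantic remark: in the $j=k$ case of the converse of (A), the bitstring of $a\xor b$ can be eventually all ones even when $a$ and $b$ are canonical, in which case $a\xor b=2^{-j}$ rather than $a\xor b<2^{-j}$; the weak inequality $a\xor b\leq 2^{-j}\leq a$ still yields the needed conclusion that $a<a\xor b$ fails, so the argument stands.
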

\begin{proof}
    (A) Observe that if $\msb(a) > \msb(b)$, then
    \begin{math}
        2^{-\msb(a)}%
        \leq%
        a%
        <%
        2^{-\msb(a)+1}%
        \leq%
        2^{-\msb(b)}%
        \leq%
        b.
    \end{math}
    Since $\msb(a) > \msb(b)$ and $a < b$, we have
    $\msb(a \oplus b) = \msb(b)$. As such, we have
    \begin{math}
        a%
        <%
        2^{-\msb(a)+1}%
        \leq%
        2^{-\msb(b)}%
        =%
        2^{-\msb(a \oplus b)}%
        \leq%
        a \oplus b.
    \end{math}
    
    Assume that $a < b$ and $a < a \xor b$.  Since $a < b$, it must be
    that $\msb(a) \geq \msb(b)$.  Observe that if $\msb(a) = \msb(b)$,
    then $a \xor b < a$, which is impossible. It follows that
    $\msb(a) > \msb(b)$, as desired.

    (B) Follows by applying \itemref{m:s:b} twice (in addition to
    using the inequalities $a \land b \leq a$ and $a \land b \leq b$),
    one can show that $a \xor b \leq a \land b$ if and only if
    $\msb(a) \geq \msb(b)$ and $\msb(b) \geq \msb(a)$.
\end{proof}

\SICOMPVer{\smallskip}%

\begin{lemma}
    \lemlab{compare:order}%
    Let $\pt = (\pt_1, \ldots, \pt_d)$ and
    $\ptq = (\ptq_1, \ldots, \ptq_d)$ be two distinct points in
    $\pset \subseteq [0,2)^d$ and $\order \in \orderset$ be an
    $\eps$-ordering over the cells of some $\eps$-quadtree $\qte$
    storing $\pset$. Then one can determine if $\pt \prec_\order \ptq$
    using $O(d\logeps)$ bitwise-logical operations.
\end{lemma}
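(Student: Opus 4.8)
The plan is to mimic, at a coarser granularity, the standard way one compares two points in the \Zorder: locate the cell of $\qte$ where $\pt$ and $\ptq$ ``meet'', and then break the tie using the child-ordering $\order$. Writing $\eps = 2^{-E}$, the tree $\qte$ is one of the trees of \lemref{reg:to:eps:quad}, say with root $[0,2^{E-j})^d$; its level-$i$ cells are exactly the dyadic cubes of side $2^{E-j-iE}$, so a node's $1/\eps^d$ children are indexed by an $E$-bit string in each of the $d$ coordinates. Consequently the comparison $\pt \prec_\order \ptq$ is decided at the coarsest cell $\cell$ of $\qte$ that contains both points but splits them among its children: $\pt \prec_\order \ptq$ iff the \DFS prescribed by $\order$ visits the child of $\cell$ holding $\pt$ before the one holding $\ptq$, i.e.\ iff that child precedes the other in the ordering $\order$.

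First I would find $\cell$. Put $a_k = \pt_k \xor \ptq_k$ for each coordinate $k$; two points lie in a common dyadic cell exactly as long as their coordinates agree on the leading bits, so the coarsest separating scale is governed by the most significant bit at which some $a_k$ is nonzero, i.e.\ by $\min_k \msb(a_k)$. By \obsref{msb:compare} the predicate $\msb(a_k) < \msb(a_{k'})$ costs $O(1)$ bitwise-logical operations, so a single pass over the $d$ coordinates identifies the index $k^\star$ attaining the minimum, and hence the position $P = \msb(a_{k^\star})$ where $\pt$ and $\ptq$ first differ. The unique block of $E$ consecutive grid positions of $\qte$ that contains $P$ pins down the level of $\cell$: $\pt$ and $\ptq$ share the level-$\ell$ cell lying ``above'' that block but fall into distinct children one level below.

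Next I would read off, coordinate by coordinate, the $E$ bits of $\pt_k$ and of $\ptq_k$ lying in that block; together these $dE$ bits are precisely the indices $c_\pt, c_\ptq \in \{1,\dots,1/\eps^d\}$ of the two children of $\cell$ containing $\pt$ and $\ptq$. I would carry out this extraction in $O(E)$ operations per coordinate --- e.g.\ by descending the $E$ binary-subdivision sub-levels inside $\cell$ and recording, at each sub-level, on which side of the splitting hyperplane the coordinate falls --- for a total of $O(dE) = O(d\logeps)$ operations. Finally, since $\order$ is a single fixed ordering of the $1/\eps^d$ children, comparing $c_\pt$ and $c_\ptq$ under $\order$ is $O(1)$: one precomputes, as a one-time cost, the rank of every child index in $\order$ (using the explicit structure of the Walecki-type orderings of \lemref{many:orderings} via \corref{grid:orderings}, or simply by one linear scan) and then compares two ranks. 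Summing, the procedure uses $O(d) + O(dE) + O(1) = O(d\logeps)$ bitwise-logical operations.

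The step I expect to be the main obstacle is the bit extraction. The model here supplies comparisons, arithmetic, and the bitwise primitives $\xor$ and $\land$, but no floor function and no variable shift, so isolating the $E$ relevant bits of a coordinate --- equivalently, determining the scale $2^{E-j-\ell E}$ of $\cell$ and each point's offset within $\cell$ --- must be implemented with those primitives alone, and one must check that $O(E)$ of them per coordinate suffice, independently of the (a priori unbounded) depth at which $\pt$ and $\ptq$ eventually diverge. A secondary subtlety is the bookkeeping around the offset $j$ and the non-unit root $[0,2^{E-j})^d$ of $\qte$: one must confirm that the most significant differing position $P$ always lands strictly inside one of the $E$-bit grid blocks of $\qte$, so that $\cell$ and its two separating children are genuinely well defined.
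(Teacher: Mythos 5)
Your high-level plan coincides with the paper's proof: XOR the coordinates, use \obsref{msb:compare} to locate the coordinate whose XOR has the most significant leading bit, identify the $E$-bit block at which the two points first fall into different children, extract those block bits in every coordinate, and look the two resulting child indices up in $\order$. The first and last steps are fine. The problem is exactly the one you flag yourself: the block-bit extraction, which is the entire technical content of the lemma, is left unresolved, and the fallback you sketch does not fit the model. ``Descending the $E$ binary sub-levels inside $\cell$ and recording on which side of the splitting hyperplane the coordinate falls'' presupposes that you know $\cell$ explicitly---its corner and its side length. But that side length is $2^{-\ell}$ for a data-dependent, a priori unbounded exponent $\ell$, and producing it (equivalently, rounding $a_{k^\star}$ down to a power of two, or shifting by a variable amount) is not an $O(E)$-operation task given only comparisons, arithmetic, $\xor$ and $\land$; it is essentially the floor/variable-shift primitive the model excludes.

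The paper closes this gap with two devices you would need. First, to find the offset $j \in \{1,\dots,E\}$ of the first differing bit position within its block, it uses the $E$ precomputed constants $b_j = 2^{E-j}/(2^E-1)$, whose binary expansions have $1$s exactly at positions $j, j+E, j+2E, \dots$; the first $j$ with $\msb(b_j) = \msb(b_j \land a_{i'})$ is the desired offset (here $a_{i'}$ is your $a_{k^\star}$). Second, and crucially, it never materializes the scale of $\cell$: the number $a_{i'}$ itself serves as a mask that is already aligned at the right, unknown, scale. Multiplying it by the \emph{fixed} constant $2^{j-k}$ moves its most significant bit to position $k$ of the relevant block, and the predicate $\indic{\msb(2^{j-k} a_{i'}) = \msb((2^{j-k} a_{i'}) \land \pt_i)}$, computable in $O(1)$ operations by \obsref{msb:compare}, reads off bit $k$ of that block of $\pt_i$. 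Running over $k = 1, \dots, E$ and all $d$ coordinates of both points yields the two child indices in $O(dE) = O(d\logeps)$ operations. Without some such device the extraction step, and hence the stated operation bound, does not go through. (Your secondary worry about the root $[0,2^{E-j})^d$ is only a bookkeeping matter of where the block boundaries sit modulo $E$, and is harmless once the extraction itself is implemented.)
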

\begin{proof}
    Recall $\eps$ is a power of two and $E = \lgeps$. In order to
    compare $\pt$ and $\ptq$, for $i = 1, \ldots, d$, compute
    $a_i = \pt_i \xor \ptq_i$. Find an index $i'$ such that
    $\msb(a_{i'}) \leq \msb(a_i)$ for all $i$. Such an index can be
    computed with $O(d)$ $\msb$ comparisons (using
    \obsref{msb:compare} \itemref{m:s:b}).  Given $\pt_{i'}$ and
    $\ptq_{i'}$, we next determine the place in which $\pt_{i'}$ and
    $\ptq_{i'}$ first differ in their binary representation. Note that
    because $\eps$ is a power of two, each digit in the base $1/\eps$
    expansion of $\pt_{i'}$ corresponds to a block of $E$ bits in the
    binary expansion of $\pt_{i'}$. Suppose that $\pt_{i'}$ and
    $\ptq_{i'}$ first differ inside the $h$\th block at an index
    $j \in \{1, \ldots E\}$.
    
    The algorithm now locates this index $j$. To do so, for 
    $j = 1, \ldots, E$, let  $b_j = 2^{E-j}/(2^E -1)\in (0,1]$ be the 
    number whose binary expansion has a 1 in positions 
    $j, j + E, j + 2E, \ldots$, and 0 everywhere else.  For 
    $j = 1, \ldots, E$, compute $b_j \land a_{i'}$ and check if 
    $\msb(a_{i'}) = \msb(b_j \land a_{i'})$ (using 
    \obsref{msb:compare} \itemref{m:s:b:e:q}). When the algorithm finds
    such an index $j$ obeying this equality, it exits the loop. We
    know that $\pt_{i'}$ and $\ptq_{i'}$ first differ in the $j$\th
    position inside the $h$\th block (the value of $h$ is never
    explicitly computed).

    It remains to extract the $E$ bits from the $h$\th block in each
    coordinate $\pt_1, \ldots, \pt_d$. For $i = 1, \ldots, d$, let
    $B_i \in \zeroone^E$ be the bits inside the $h$\th block of
    $\pt_i$. For $k = 1, \ldots, E$, set
    $B_{i,k} = \indic{\msb(2^{j-k} a_{i'}) = \msb((2^{j-k} a_{i'})
       \land \pt_i)}$ (where $\indic{\cdot}$ is the indicator
    function). By repeating a similar process for all
    $\ptq_1, \ldots, \ptq_d$, we obtain the coordinates of the cells
    in which $\pt$ and $\ptq$ differ. We can then consult $\order$ to
    determine whether or not $\pt \prec_\order \ptq$.

    This implies that $\pt$ and $\ptq$ can be compared using
    $O(d\logeps)$ operations by \obsref{msb:compare}.
\end{proof}

\paragraph*{Remark}
In the word RAM model for integer input, the extra $\log(1/\eps)$
factor in the above time bound can be eliminated: $\msb$ can be
explicitly computed in $O(1)$ time by a complicated algorithm of
Fredman and Willard~\cite{fw-sitbf-93}; this allows us to directly
jump to the right block of each coordinate and extract the relevant
bits. (Furthermore, assembly operations performing such computations
are nowadays available on most CPU{}s.)

\subsection{The result}

\begin{theorem}
    \thmlab{lso}%
    For $\epsA \in (0,1/2]$, there is a set $\ordAll$ of
    $\Od(\log (1/\epsA)/\epsA^d)$ orderings of $[0,1)^d$, such
    that for any two points $\pA, \pB \in [0,1)^d$ there is an
    ordering $\order \in \ordAll$ defined over $[0,1)^d$, such that
    for any point $\pC$ with $\pA \prec_\order \pC \prec_\order \pB$
    it holds that either
    $\distY{\pA}{\pC} \leq \epsA \distY{\pA}{\pB}$ or
    $\distY{\pB}{\pC} \leq \epsA \distY{\pA}{\pB}$.

    Furthermore, given such an ordering $\order$, and two points
    $\pA,\pB$, one can compute their ordering, according to $\order$,
    using $O(d\logeps)$ arithmetic and bitwise-logical operations.
\end{theorem}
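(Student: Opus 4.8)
The plan is to glue together the pieces already assembled above: the $\eps$-orderings of $[0,2)^d$ from \defref{o:set}, the adjacency observation recorded immediately after that definition (two points lying in a regular-quadtree cell $\cell$ at distance more than $\eps\diamX{\cell}$ are adjacent in one of the $\eps$-orderings), and the shifting guarantee of \lemref{shifting:works}. The only real design choice is the grid parameter $\eps$ of the $\eps$-quadtrees, which I will take to be a small ($d$-dependent) fraction of the target $\epsA$; once $\eps$ is fixed correctly, the argument is essentially bookkeeping on top of these lemmas.

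First I would set $D=2\ceil{d/2}$ and let $\eps=2^{-E}$ be the largest power of two with $\eps\le\epsA/\bigl(2(D+1)\sqrt{d}\bigr)$, so that $1/\eps=\Theta(1/\epsA)$ for fixed $d$, and form the corresponding family $\orderset$ of $\eps$-orderings. I would then define $\ordAll$ to contain, for each shift $v_i=(i/(D+1),\ldots,i/(D+1))$ with $i\in\{0,\ldots,D\}$ and each $\order\in\orderset$, the ordering of $[0,1)^d$ that ranks $\pA$ before $\pB$ exactly when $\pA+v_i$ precedes $\pB+v_i$ under $\order$ (legitimate since $\pA+v_i,\pB+v_i\in[0,2)^d$). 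Since there are $O(1)$ shifts (for fixed $d$) and $|\orderset|=O\bigl(\lgeps\,(1/\eps)^d\bigr)$, this yields $|\ordAll|=O\bigl((1/\epsA^{d})\log(1/\epsA)\bigr)$.

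For the locality property, given arbitrary $\pA,\pB\in[0,1)^d$ I would invoke \lemref{shifting:works} to obtain an index $i$ and a cell $\cell$ of the regular quadtree over $[0,2)^d$ containing both $\pA+v_i$ and $\pB+v_i$ with $\sidelengthX{\cell}\le 2(D+1)\distY{\pA}{\pB}$; the choice of $\eps$ then forces $\eps\diamX{\cell}=\eps\sqrt{d}\,\sidelengthX{\cell}\le\epsA\distY{\pA}{\pB}<\distY{\pA}{\pB}$. By \lemref{reg:to:eps:quad} the cell $\cell$ is a node of some $\eps$-quadtree $\qte$, and since every child of $\cell$ in $\qte$ has diameter $\eps\diamX{\cell}$ — strictly smaller than the (translation-invariant) distance $\distY{\pA}{\pB}$ — the shifted points must lie in distinct children $\cell_\pA\ni\pA+v_i$ and $\cell_\pB\ni\pB+v_i$. \corref{grid:orderings} supplies an $\order\in\OrdY{1/\eps}{d}$ in which $\cell_\pA$ and $\cell_\pB$ are adjacent cells of the $(1/\eps)$-grid of children of $\cell$, and hence in the $\eps$-ordering $(\qte,\order)$ the points of $\cell_\pA$ form a contiguous block immediately followed (or preceded) by the contiguous block of points of $\cell_\pB$. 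Consequently, in the member of $\ordAll$ built from $v_i$ and $(\qte,\order)$, any $\pC$ strictly between $\pA$ and $\pB$ satisfies $\pC+v_i\in\cell_\pA\cup\cell_\pB$, so $\pC$ lies within $\diamX{\cell_\pA}=\eps\diamX{\cell}\le\epsA\distY{\pA}{\pB}$ of whichever of $\pA$ or $\pB$ shares its child — exactly the claimed dichotomy. The ``furthermore'' clause is then immediate: to compare $\pA,\pB$ under a member of $\ordAll$ one adds the associated shift ($O(d)$ arithmetic operations) and applies \lemref{compare:order} to the shifted points and the $\eps$-ordering, costing $O(d\logeps)$ bitwise-logical operations.

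The hard part is not any single step but the calibration of $\eps$ against $\epsA$: $\eps$ must be small enough that the child cells above have diameter at most $\epsA\distY{\pA}{\pB}$ (which needs $\eps$ below roughly $\epsA/(\sqrt{d}\,(D+1))$) and that the shifted points genuinely separate inside a cell of diameter $O(\distY{\pA}{\pB})$, yet large enough that $\lgeps\,(1/\eps)^d$ stays $O\bigl((1/\epsA^{d})\log(1/\epsA)\bigr)$. Both constraints hold simultaneously because $d$ is a constant, and rounding $\eps$ down to a power of two only tightens the first requirement without harming the second; carefully tracking the factors $D+1$, $\sqrt{d}$, and the side-length-to-diameter conversion through \lemref{shifting:works} and the $\eps$-quadtree levels is the only place that needs attention.
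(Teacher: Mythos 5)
Your proposal is correct and follows essentially the same route as the paper's proof: shift via \lemref{shifting:works} to trap the pair in a quadtree cell of side length $O(d)\distY{\pA}{\pB}$, pass to the $\eps$-quadtree containing that cell via \lemref{reg:to:eps:quad}, and use \corref{grid:orderings} to make the two children adjacent, with $\eps$ calibrated against $\epsA$ by the factor $2(D+1)\sqrt{d}$ (the paper performs this rescaling at the end rather than up front, which is the only cosmetic difference). Your version is in fact slightly more explicit than the paper's about why the shifted points land in \emph{distinct} children and about how the shift is folded into the comparison cost.
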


\begin{proof}

    Let $\ordAll$ be the set of all orderings defined by picking an
    ordering from $\orderset$, as defined by \defref{o:set} using the
    parameter $\eps$, together with a shift from
    \lemref{shifting:works}.

    Consider any two points $\pA,\pB \in [0,1)^d$. By
    \lemref{shifting:works} there is a shift for which the two points
    fall into a quadtree cell $\cell$ with side length at most
    $2(D+1) \distY{\pA}{\pB}$. Next, there is an $\eps$-quadtree
    $\qte$ that contains $\cell$, and the two children that correspond
    to two cells $\cell_\pA$ and $\cell_\pB$ with side length at most
    $2(D+1)\epsA\distY{\pA}{\pB}$, which readily implies that the
    diameter of these cells is at most
    $2(D+1)\sqrt{d}\epsA\distY{\pA}{\pB}$. Furthermore, there is an
    $\eps$-ordering in $\orderset$ such that all the points of
    $\cell_\pA$ are adjacent to all the points of $\cell_\pB$ in this
    ordering.  This implies the desired claim, after adjusting $\eps$
    by a factor of $2(D+1)\sqrt{d}$ (and rounding to a power of 2).
\end{proof}

From now on, we refer to the set of orderings $\ordAll$ in the above
Theorem as \lso{}s.  We remark that by the readjustment of $\eps$ in
the final step of the proof, the number of \lso{}s when including the
factors involving $d$ is $O(d^{3/2})^d\cdot (1/\eps^d)\logeps$.

\subsubsection{Discussion}

\paragraph*{Connection to locality-sensitive hashing}

Let $\pset$ be a set of $n$ points in Hamming space $\brc{0,1}^d$.
Consider the decision version of the $(1 + \eps)$-approximate nearest
neighbor problem. Specifically, for a pre-specified radius $r$ and any
given query point $\ptq$, we would like to efficiently decide whether
or not there exists a point $\pt \in \pset$ such that
$\distY{\ptq}{\pt}_1 \leq (1 + \eps)r$ or conclude that all points in
$\pset$ are at least distance $r$ from $\ptq$.  The locality-sensitive
hashing (\LSH) technique \cite{im-anntrcd-98} implies the existence of
a data structure supporting this type of decision query in time
$O(dn^{1/(1 + \eps)} \log n)$ time (which is correct with high
probability) and using total space $O(dn^{1 +1/(1+\eps)}\log n)$.
Similar results also hold in the Euclidean setting.

At a high level, \LSH works as follows. Start by choosing 
$k := k(\eps, r, n)$ indices in $[d]$ at random (with replacement). 
Let $R$ denote the resulting multiset of coordinates. For each point 
$\pt \in \pset$, let $\pt_R$ be the projection $\pt$ onto these 
coordinates of $R$. We can group the points of $\pset$ into buckets, 
where each bucket contains points with the same projection. Given a 
query point $\ptq$, we check if any of the points in the same bucket
as $\ptq$ is at distance at most $(1 + \eps)r$ from $\ptq$. This 
construction can also be repeated a sufficient number of times in 
order to guarantee success with high probability.

The idea of bucketing can also be viewed as an implicit ordering on
the randomly projected point set by ordering points lexicographically
according to the $k$ coordinates. In this sense, the query algorithm
can be viewed as locating $\ptq$ within each of the orderings, and
comparing $\ptq$ to similar points nearby in each ordering. From this
perspective, every \lso{} can be viewed as an \LSH scheme. Indeed, 
for a given query point $\ptq$, the approximate nearest neighbor to
$\ptq$ can be found by inspecting the elements adjacent to $\ptq$ in
each of the \lso{}s and returning the closest point to $\ptq$ found
(see \thmref{dynamic:nn}).

Of course, the main difference between the two schemes is that for
every fixed $\eps$, the number of ``orderings'' in an \LSH scheme is
polynomial in both $d$ and $n$.  While for \lso{}s, the number of
orderings remains exponential in $d$.  This trade-off is to be expected, 
as \lso{}s guarantee a much stronger property than that of an \LSH 
scheme.

\paragraph*{Extension of \lso{}s to other norms in Euclidean space}
The $L_p$-norm, for $p \geq 1$, of a vector $x \in \Re^d$ is defined as
\begin{math}
    \| x \|_p = \pth{\abs{x_1}^p + \cdots + \abs{x_n}^p}^{1/p}.
\end{math}
The $L_\infty$-norm, or maximum norm, is defined as
\begin{math}
    \| x \|_\infty = \max\pth{\abs{x_1}, \ldots, \abs{x_n}}.
\end{math}

The result of \thmref{lso} also holds for any $L_p$-norm. 
The key change that is needed is in the proof of \lemref{shifting:works}:
For any two points $\pts, \ptt \in [0,1)^d$, there exists
a shift $v$ such that $\pts + v$ and $\ptt + v$ are contained
in a quadtree cell of side length at most $2(D+1)\distY{\pts}{\ptt}_p$.
This extension follows easily from the proof of the Lemma (see
\apndref{shifting}). \thmref{lso} then follows by adjusting
$\eps$ by a factor of $2(D+1)d^{1/p}$ in the last step,
implying that the number of orderings will be 
$O(d^{1 + 1/p})^d (1/\eps^d) \logeps$. (For the $L_\infty$-norm, 
$\eps$ only needs to be adjusted by a factor of $2(D+1)$.)

\paragraph*{Extension of \lso{}s for doubling metrics}

An abstraction of low-dimensional Euclidean space, is a metric space
with (low) doubling dimension. Formally, a metric space
$(\MetricSpace, \MetricDist)$ has \emph{doubling dimension} $\lambda$
if any ball of $\MetricSpace$ of radius $r$ can be covered by at most
$2^\lambda$ balls of half the radius (i.e., $r/2$). It is known that
$\Re^d$ has doubling dimension $O(d)$ \cite{v-cbseb-05}.  We point
out that locality-sensitive orderings still exist in this case, but
they are less constructive in nature, since one needs to be provided
with all the points of interest in advance.

For a point set $\PS \subseteq \MetricSpace$, the analogue of a
quadtree for a metric space is a net tree \cite{hm-fcnld-06}.  A net
tree can be constructed as follows (the construction algorithm
described here is somewhat imprecise): The root node corresponds to
the point set $\PS \subseteq \MetricSpace$. Compute a randomized
partition of $\PS$ of diameter 1/2 (assume $\PS$ has diameter one),
and for each cluster in the partition, create an associated node and
hang it on the root.  The tree is computed recursively in this manner,
at each level $i$ computing a random partition of diameter
$2^{-i}$. The leaves of the tree are points of $\PS$.

As with quadtrees, it is possible during this randomized construction
for two nearby points to be placed in different clusters and be
separated further down the tree. If $\ell = \MetricDist(\pt,\ptq)$ for
two points $\pt, \ptq \in \PS$, then the probability that $\pt$ and
$\ptq$ lie in different clusters of diameter $r = 2^{-i}$ in the
randomized partition is at most $O((\ell/r) \log n)$
\cite{frt-tbaam-04}. In particular, for $r \approx 1/(\ell \log n)$,
the probability that $\pt$ and $\ptq$ are separated is at most a
constant. If we want this property to hold with high probability
for all pairs of points, one needs to construct $O(\log n)$ (randomly
constructed) net trees of $\PS$. (This corresponds to randomly
shifting a quadtree $O(\log n)$ times in the Euclidean setting.)

Given such a net tree $T$, each node has $I = 2^{O(\lambda)}$
children. We can arbitrarily and explicitly number the children of
each node by a distinct label from $\IRX{I}$. One can define an
ordering of such a tree as we did in the Euclidean case, except that
the gap (in diameter) between a node and its children is
$O(\eps/\log n)$ instead of $\eps$. Repeating our scheme in the
Euclidean case, this implies that one would expect to require
$ (\eps^{-1}\log n)^{O(\lambda)}$ orderings of $\PS$.

This requires having all the points of $\PS$ in advance, which is a
strong assumption for a dynamic data structure (as in some of the
applications below). For example, Gottlieb and Roditty
\cite{gr-odsdm-08} show how to maintain dynamic spanners in a doubling
metric, but only assuming that after a point has been deleted from
$\PS$, the distance between the deleted point and a point currently in
$\PS$ can still be computed in constant time.

\section{Applications}

\subsection{Bichromatic closest pair}

Given an ordering $\order \in \ordAll$, and two finite sets of points
$\rset, \bset $ in $\Re^d$, let
$\AccSet = \AccSet(\order, \rset,\bset)$ be the set of all pairs of
points in $\rset \times \bset$ that are adjacent in the ordering of
$\rset \cup \bset$ according to $\order$.  Observe that inserting or
deleting a single point from these two sets changes the contents of
$\AccSet$ by a constant number of pairs.  Furthermore, a point
participates in at most two pairs.

\begin{lemma}
    \lemlab{1d:closest:pair}%
    Let $\rset$ and $\bset$ be two sets of points in $[0,1)^d$, and
    let $\eps \in(0,1)$ be a parameter. Let $\order \in \ordAll$ be a
    \lso (see \thmref{lso}). Then, one can maintain the set
    $\AccSet = \AccSet(\order, \rset,\bset)$ under insertions and
    deletions to $\rset$ and $\bset$. In addition, one can maintain
    the closest pair in $\AccSet$ (under the Euclidean metric). Each
    update takes $O(d\log n\logeps)$ time, where $n$ is the total
    size of $\rset$ and $\bset$ during the update operation.
\end{lemma}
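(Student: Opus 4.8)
The plan is to maintain, side by side, two ordered dictionaries. The first is a balanced binary search tree $\Tree$ storing the points of $\rset\cup\bset$ sorted according to $\prec_\order$, where each node is additionally tagged by whether its point belongs to $\rset$ or to $\bset$. The second is a balanced search tree used as a priority queue, storing the pairs currently in $\AccSet$ keyed by their Euclidean length; the closest pair is then the minimum of this structure, and it is reported in $O(1)$ time.

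First I would observe that every comparison made while navigating $\Tree$ is a test of the form ``$\pt \prec_\order \ptq$?'', which costs $O(d\logeps)$ time by \thmref{lso}. Hence a standard insertion or deletion in $\Tree$, together with locating the predecessor $a$ and successor $b$ (in $\prec_\order$) of the affected point, uses $O(\log n)$ comparisons and therefore $O(d\log(n)\logeps)$ time. Next, to keep $\AccSet$ current I would invoke the observation recorded just before the lemma: inserting or deleting a point changes only $O(1)$ of the adjacent pairs. Concretely, on inserting a point $x$ between its new neighbors $a$ and $b$, the pair $\brc{a,b}$ stops being adjacent while $\brc{a,x}$ and $\brc{x,b}$ become adjacent (with fewer changes if $x$ becomes the new minimum or maximum), and the situation on deletion is symmetric. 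For each of these $O(1)$ candidate pairs I would consult the stored color tags and, if the pair is bichromatic, compute $\distY{\cdot}{\cdot}$ and insert it into or delete it from the priority queue. Since each point lies in at most two pairs of $\AccSet$, keeping a handle from every point to its pair(s) lets each such deletion be carried out in $O(\log n)$ time.

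Putting the pieces together, an update costs $O(d\log(n)\logeps)$ for the work inside $\Tree$, plus $O(1)$ priority-queue operations at $O(\log n)$ each, for a total of $O(d\log(n)\logeps)$; and at every moment the minimum of the priority queue is the closest pair of $\AccSet$ under the Euclidean metric. I do not expect a genuinely hard step here: the locality-sensitive machinery is already packaged inside \thmref{lso}, and the ``$O(1)$ changes per update'' property is exactly what makes the pair maintenance cheap. The only points needing a little care are routine bookkeeping ones---maintaining the handles so that the changed pairs are individually deletable from the priority queue in logarithmic time, and fixing a tie-breaking rule for $\prec_\order$ (say, by color and then insertion time) so that two coincident points of opposite color, which have distance $0$, are adjacent and hence recorded.
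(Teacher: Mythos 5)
Your proof is correct and follows essentially the same route as the paper: balanced search trees ordered by $\prec_\order$ with $O(d\logeps)$-time comparisons, the observation that each update changes only $O(1)$ adjacent pairs, and a min-heap over $\AccSet$ for reporting the closest pair. The only difference is that you keep a single color-tagged tree on $\rset\cup\bset$, whereas the paper keeps two separate trees $\Tree_\rset$ and $\Tree_\bset$ and identifies the affected pairs via successor/predecessor queries in both; your variant is, if anything, slightly cleaner and yields the same bounds.
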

\begin{proof}
    Maintain two balanced binary search trees $\Tree_\rset$ and
    $\Tree_\bset$ storing the points in $\rset$ and $\bset$,
    respectively, according to the order $\order$.  Insertion,
    deletion, predecessor query and successor query can be implemented
    in $O(d \logeps \log n)$ time (since any query requires
    $O(\log n)$ comparisons each costing $O(d\logeps)$ time by
    \lemref{compare:order}). We also maintain a min-heap of the 
    pairs in $\AccSet$ sorted according to the Euclidean
    distance. The minimum is the desired closest pair.
    Notice that a single point can participate in at most two
    pairs in $\AccSet$.

    We now explain how to handle updates.
    Given a newly inserted point $\ptr$ (say a red point that
    belongs to $\rset$), we compute the (potential) pairs it
    participates in, by computing its successor $\ptr'$ in $\rset$, and
    its successor $\ptb'$ in $\bset$. If
    $\ptr \prec_\order \ptb' \prec_\order \ptr'$ then the new pair
    $\ptr \ptb'$ should be added to $\AccSet$. The pair before $\ptr$
    in the ordering that might use $\ptr$ is computed in a similar
    fashion.  In addition, we recompute the predecessor and successor
    of $\ptr$ in $\rset$, and we recompute the pairs they might
    participate in (deleting potentially old pairs that are no longer
    valid).

    Deletion is handled in a similar fashion---all points included in
    pairs with the deleted point recompute their pairs. In addition,
    the successor and predecessor (of the same color) need to
    recompute their pairs. This all requires a constant number of
    queries in the two trees, and thus takes the running time as
    stated.
\end{proof}

\begin{theorem}
    \thmlab{bichromatic:closest:pair}%
    Let $\rset$ and $\bset$ be two sets of points in $[0,1)^d$, and
    let $\eps \in (0,1/2]$ be a parameter. Then one can maintain a
    $(1+\eps)$-approximation to the bichromatic closest pair in
    $\rset \times \bset$ under updates (i.e., insertions and
    deletions) in $\Od(\log n \log^2(1/\eps)/\eps^d)$ time per
    operation, where $n$ is the total number of points in the two
    sets. The data structure uses $\Od(n\logeps/\eps^d)$ space, and
    at all times maintains a pair of points 
    $\ptr \in \rset,\ \ptb \in \bset$, such that 
    $\distY{\ptr}{\ptb} \leq (1+\eps) \distSetY{\rset}{\bset}$,
    where
    $\distSetY{\rset}{\bset} = \min_{\ptr \in \rset, \ptb \in \bset}
    \distY{\ptr}{\ptb}$.
\end{theorem}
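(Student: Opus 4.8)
The plan is to combine \thmref{lso} with \lemref{1d:closest:pair}, applied to all orderings in $\ordAll$ simultaneously. First I would rescale so the points lie in $[0,1)^d$ (the two given sets already do), and instantiate $\ordAll$ for the parameter $\eps$ from \thmref{lso}; this family has $|\ordAll| = O((1/\eps^d)\log(1/\eps))$ orderings. For each ordering $\order \in \ordAll$, maintain the structure of \lemref{1d:closest:pair}: the two balanced binary search trees $\Tree_\rset, \Tree_\bset$ storing $\rset,\bset$ in the order $\order$, the set $\AccSet(\order,\rset,\bset)$ of bichromatic adjacent pairs, and a min-heap on $\AccSet$ keyed by Euclidean distance. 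On top of all these, I would keep a single global min-heap holding, for each $\order$, the current closest pair in $\AccSet(\order,\rset,\bset)$; the overall minimum of this global heap is the pair the data structure reports.

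The correctness argument is the crux, and it is short given \thmref{lso}. Let $\ptr^\star \in \rset$, $\ptb^\star \in \bset$ realize $\distSetY{\rset}{\bset}$. By \thmref{lso} (with parameter $\eps$) there is an ordering $\order \in \ordAll$ such that every point $\pC$ with $\ptr^\star \prec_\order \pC \prec_\order \ptb^\star$ satisfies $\distY{\ptr^\star}{\pC} \le \eps\distY{\ptr^\star}{\ptb^\star}$ or $\distY{\ptb^\star}{\pC} \le \eps\distY{\ptr^\star}{\ptb^\star}$. I would argue that the pair stored in $\AccSet(\order,\rset,\bset)$ closest to each endpoint is within $(1+\eps)$ of the optimum: walking from $\ptr^\star$ toward $\ptb^\star$ in the order $\order$, let $\pC$ be the first point of $\bset$ encountered (it exists, since $\ptb^\star$ itself is one). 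Then $\ptr^\star$ and $\pC$ are of opposite colors and there is no point of $\rset\cup\bset$ strictly between them that is of a different color than... more carefully: the predecessor of $\pC$ among $\rset\cup\bset$ in the order is some red point $\ptr'$ with $\ptr^\star \preceq_\order \ptr' \prec_\order \pC$, so $(\ptr',\pC)$ is an adjacent bichromatic pair and hence lies in $\AccSet(\order,\rset,\bset)$. Since $\pC$ lies between $\ptr^\star$ and $\ptb^\star$, the \thmref{lso} guarantee gives $\distY{\pC}{\ptr^\star} \le \eps\,\distopt$ or $\distY{\pC}{\ptb^\star}\le\eps\,\distopt$ where $\distopt = \distSetY{\rset}{\bset}$; the latter contradicts minimality of $\distopt$ unless $\pC = \ptb^\star$, so in either case $\distY{\pC}{\ptr^\star}\le\eps\,\distopt$ (taking $\pC=\ptb^\star$ in the degenerate case makes the bound $0\cdot$ trivial) — hence, using $\ptr' $ between $\ptr^\star$ and $\pC$ as well so $\distY{\ptr'}{\ptr^\star}\le\eps\,\distopt$, we get $\distY{\ptr'}{\pC} \le \distY{\ptr'}{\ptr^\star} + \distY{\ptr^\star}{\pC} \le \distopt + 2\eps\,\distopt$ — and after rescaling $\eps$ by a constant factor this is a $(1+\eps)$-approximation. (One runs the symmetric argument from $\ptb^\star$ to be safe, and the min-heap picks up the better of the two.) The pair the algorithm outputs is at least as good as $(\ptr',\pC)$, so it is a valid $(1+\eps)$-approximation.

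For the update time: an insertion or deletion of a point is performed in each of the $O((1/\eps^d)\log(1/\eps))$ per-ordering structures. By \lemref{1d:closest:pair} each such update costs $O(d\log(n)\log(1/\eps))$, and updating the per-ordering closest pair and the global heap adds only $O(\log|\ordAll|) = O(\log(1/\eps) + \log(1/\eps)) = O(\log(1/\eps))$ per ordering plus one global heap operation of cost $O(\log|\ordAll|)$. Multiplying, the total is $O\bigl(|\ordAll|\cdot d\log(n)\log(1/\eps)\bigr) = O\bigl(\log(n)\log^2(1/\eps)/\eps^d\bigr)$ (absorbing the $d$ factor into the $O$-notation as the paper does). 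For space: each of the $O((1/\eps^d)\log(1/\eps))$ structures stores $O(n)$ points in its two trees plus $O(n)$ pairs in $\AccSet$ and its heap, giving $O(n\log(1/\eps)/\eps^d)$ total, as claimed.

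The main obstacle I expect is purely bookkeeping in the correctness step: making the "first bichromatic adjacent pair encountered" argument airtight, in particular handling the degenerate cases where $\ptr^\star$ or $\ptb^\star$ is itself the adjacent partner, and being careful that the \thmref{lso} property is stated for points strictly between $\pA$ and $\pB$, so one must identify the right adjacent pair (the predecessor-of-$\pC$ pair) rather than reasoning about $\ptr^\star,\ptb^\star$ directly. Everything else — maintaining a heap of heaps, charging the update cost across the $|\ordAll|$ orderings — is routine given \lemref{1d:closest:pair}.
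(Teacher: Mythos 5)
Your data structure, the reduction to \thmref{lso} plus \lemref{1d:closest:pair}, and the space and update-time accounting all match the paper's proof. The one genuine problem is in the case analysis at the heart of the correctness argument, where you resolve the dichotomy exactly backwards. You write that $\distY{\pC}{\ptb^\star}\le\eps\,\distopt$ ``contradicts minimality of $\distopt$'' and conclude $\distY{\pC}{\ptr^\star}\le\eps\,\distopt$. But $\pC$ and $\ptb^\star$ are both in $\bset$, so their being close contradicts nothing; it is the conclusion you draw that is impossible: $\pC\in\bset$ and $\ptr^\star\in\rset$ form a bichromatic pair, so $\distY{\pC}{\ptr^\star}\ge\distopt>\eps\,\distopt$ whenever $\distopt>0$ and $\eps<1$. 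The correct resolution is the opposite one: $\pC$ must be within $\eps\,\distopt$ of $\ptb^\star$, and $\ptr'$ (being red) must be within $\eps\,\distopt$ of $\ptr^\star$ --- the latter because a red point within $\eps\,\distopt$ of $\ptb^\star$ would itself beat the optimum (you assert this bound on $\ptr'$ without justification; this color argument is what justifies it). Your final displayed bound $\distY{\ptr'}{\pC}\le\distopt+2\eps\,\distopt$ is the right one, but it does not follow from the two-term triangle inequality through $\ptr^\star$ that you invoke --- with your (false) intermediate claims that would give $\distY{\ptr'}{\pC}\le 2\eps\,\distopt$, a bichromatic pair shorter than $\distopt$, itself a contradiction. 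It follows instead from the three-term inequality $\distY{\ptr'}{\ptr^\star}+\distY{\ptr^\star}{\ptb^\star}+\distY{\ptb^\star}{\pC}$ under the corrected case analysis.

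Once that is flipped, your argument is exactly the paper's. The paper phrases the same step slightly more cleanly: letting $\PS_{\ptr}$ (resp.\ $\PS_{\ptb}$) be the points of the interval within $\eps\,\distopt$ of $\ptr^\star$ (resp.\ $\ptb^\star$), minimality forces $\PS_{\ptr}\subseteq\rset$ and $\PS_{\ptb}\subseteq\bset$ (a bichromatic pair inside either set would have length at most $\eps\,\distopt$), and since the interval is covered by these two sets, some adjacent pair in the ordering has its red point near $\ptr^\star$ and its blue point near $\ptb^\star$; that pair sits in $\AccSet$. The degenerate cases $\ptr'=\ptr^\star$ and $\pC=\ptb^\star$ are harmless, and the rest of your write-up --- the global min-heap, the $O(\cardin{\ordAll}\cdot d\log (n)\logeps)$ update cost, the space bound --- is fine.
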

\begin{proof}
    We maintain the data structure of \lemref{1d:closest:pair} for all
    the \lso{}s of \thmref{lso}. %
    All the good pairs for these data structures can be maintained
    together in one global min-heap. The claim is that the minimum
    length pair in this heap is the desired approximation.

    To see that, consider the bichromatic closest pair
    $\ptr \in \rset$ and $\ptb \in \bset$.  By \thmref{lso} there is a
    \lso $\order$, such that the interval $I$ in the ordering between
    $\ptr$ and $\ptb$ contains points that are in distance at most
    $\ell = \epsA \distY{\ptr}{\ptb}$ from either $\ptr$ or $\ptb$. In
    particular, let $\PS_\ptr$ (resp., $\PS_\ptb$) be all the points
    in $I$ in distance at most $\ell$ from $\ptr$ (resp.,
    $\ptb$). Observe that $\PS_\ptr \subseteq \rset$, as otherwise,
    there would be a bichromatic pair in $\PS_\rset$, and since the
    diameter of this set is at most $\ell$, this would imply that
    $(\ptr,\ptb)$ is not the closest bichromatic pair---a
    contradiction. Similarly, $\PS_\ptb \subseteq \bset$. As such,
    there must be two points $\ptb' \in \bset$ and $\ptr' \in \rset$,
    that are consecutive in $\order$, and this is one of the pairs
    considered by the algorithm (as it is stored in the min-heap). In
    particular, by the triangle inequality, we have
    \begin{equation*}
        \distY{\ptr'}{\ptb'}%
        \leq%
        \distY{\ptr'}{\ptr} +\distY{\ptr}{\ptb} +\distY{\ptb}{\ptb'}
        \leq%
        2 \ell + \distY{\ptr}{\ptb}%
        \leq%
        (1+2\eps)\distY{\ptr}{\ptb}.
    \end{equation*}
    The theorem follows after adjusting $\eps$ by a factor of 2.
\end{proof}

\paragraph*{Remark}
In the word RAM model, for integer input in $\{1,\ldots,U\}^d$, the
update time can be improved to $\Od((\log\log U)\log^2(1/\eps)/\eps^d)$
expected, by using van Emde Boas trees~\cite{e-poflt-77} in place of
the binary search trees (and the min-heaps as well).  With standard
word operations, we may not be able to explicitly map each point to an
integer in one dimension following each locality-sensitive ordering,
but we can still simulate van Emde Boas trees on the input as if the
mapping has been applied.  Each recursive call in the van Emde Boas
recursion focuses on a specific block of bits of each input coordinate
value (after shifting); we can extract these blocks, and perform the
needed hashing operations on the concatenation of these blocks over
the $d$ coordinates of each point.

\subsection{Dynamic spanners}
\seclab{spanners}
\begin{defn}
\deflab{spanner} 
For a set of $n$ points $\pset$ in $\Re^d$ and a
parameter $t \geq 1$, a \emphi{$t$-spanner} of $\pset$ is an
undirected graph $\graph = (\pset, \graphedges)$ such that for all
$\pt, \ptq \in \pset$,
$$
\dist{\pt - \ptq} \leq \graphdist{\pt}{\ptq} \leq t \dist{\pt - \ptq},
$$
where $\graphdist{\pt}{\ptq}$ is the length of the shortest path from
$\pt$ to $\ptq$ in $\graph$ using the edge set $\graphedges$.
\end{defn}

Using a small modification of the results in the previous section, we
easily obtain a dynamic $(1+\eps)$-spanner. Note that there is nothing
special about how the data structure in
\thmref{bichromatic:closest:pair} deals with the bichromatic point
set. If the point set is monochromatic, modifying the data structure
in \lemref{1d:closest:pair} to account for the closest monochromatic
pair of points leads to a data structure with the same bounds and
maintains the $(1 + \eps)$-approximate closest pair.

The construction of the spanner is very simple: Given $\pset$ and
$\epsA \in (0,1)$, maintain orderings of the points specified by
$\ordAll$ (see \thmref{lso}). For each $\order \in \ordAll$, let
$\graphedges_\order$ be the edge set consisting of edges connecting
two consecutive points according to $\sigma$,
with weight equal to their Euclidean distance. Thus
$\cardin{\graphedges_\order} = n-1$. Our spanner
$\graph = (\pset, \graphedges)$ then consists of the edge set
$\graphedges = \bigcup_{\order \in \ordAll} \graphedges_\order$.

\begin{theorem}
    \thmlab{dynamic:spanners}%
    Let $\pset$ be a set of $n$ points in $[0,1)^d$ and
    $\eps \in (0,1/2]$. One can compute a $(1 + \eps)$-spanner $\graph$
    of $\pset$ with $\Od(n\logeps/\eps^d)$ edges, where every vertex has
    degree $\Od(\logeps/\eps^d)$. Furthermore, a point can be inserted or
    deleted in $\Od(\log n \log^2(1/\eps)/\eps^d)$ time, where each
    insertion or deletion creates or removes at most 
    $\Od(\logeps/\eps^d)$ edges in the spanner.
\end{theorem}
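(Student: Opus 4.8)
The plan is to verify the three claims of the theorem --- the stretch bound, the edge-count/degree bound, and the update cost --- essentially by reading them off the construction together with Theorem~\ref{theo:lso} and Lemma~\ref{lemma:1d:closest:pair}. First I would establish that $\graph$ is a $(1+\eps)$-spanner. The lower bound $\dist{\pt-\ptq}\le\graphdist{\pt}{\ptq}$ is immediate since every edge has weight equal to the Euclidean distance between its endpoints. For the upper bound, fix $\pt,\ptq\in\pset$ and induct on the rank of the pair $(\pt,\ptq)$ in the sorted order of all pairwise distances (equivalently, strong induction on $\dist{\pt-\ptq}$ over the finite point set). By Theorem~\ref{theo:lso} there is an ordering $\order\in\ordAll$ such that every point $\pC$ strictly between $\pt$ and $\pB=\ptq$ in $\order$ satisfies $\dist{\pt-\pC}\le\eps'\dist{\pt-\ptq}$ or $\dist{\ptq-\pC}\le\eps'\dist{\pt-\ptq}$, where $\eps'$ is the parameter fed into the construction (a constant-factor rescaling of $\eps$). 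Walk along the interval between $\pt$ and $\ptq$ in $\order$: consecutive points are joined by edges of $\graphedges_\order$, so it suffices to bound the total Euclidean length of that walk. Every intermediate vertex is within $\eps'\dist{\pt-\ptq}$ of $\pt$ or of $\ptq$, so the walk stays inside $B(\pt,\eps'\dist{\pt-\ptq})\cup B(\ptq,\eps'\dist{\pt-\ptq})$; I would split the walk at the last vertex $\pt'$ lying in the first ball before it jumps to the second ball and bound each of the three pieces. The two pieces inside the balls consist of edges between points at distance at most $2\eps'\dist{\pt-\ptq}$ apart, which are strictly shorter than $\dist{\pt-\ptq}$, so induction gives each such edge a path of length at most $(1+\eps)$ times its length; the one ``bridging'' edge $\pt'\pt''$ has length at most $\dist{\pt-\ptq}+2\eps'\dist{\pt-\ptq}$ by the triangle inequality. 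Summing and using that the two balls have total diameter $O(\eps')\dist{\pt-\ptq}$ to control how much length the in-ball portions contribute, one gets $\graphdist{\pt}{\ptq}\le(1+\eps)\dist{\pt-\ptq}$ provided $\eps'$ was chosen a sufficiently small constant fraction of $\eps$ --- this is the standard ``hierarchical / greedy-path'' spanner argument and is exactly where the constant adjustment in Theorem~\ref{theo:lso} is consumed.

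Second, the combinatorial bounds are direct: there are $\cardin{\ordAll}=O\bigl((1/\eps^d)\log(1/\eps)\bigr)$ orderings by Theorem~\ref{theo:lso}, and each contributes a path on $n$ vertices, hence $n-1$ edges and degree at most $2$ in that ordering. Summing over all orderings gives $\cardin{\graphedges}=O(n\logeps/\eps^d)$ and maximum degree $O(\logeps/\eps^d)$, as claimed. (One may take the union of edge sets, so the true counts are only smaller.)

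Third, for the dynamic bounds I would reuse Lemma~\ref{lemma:1d:closest:pair} essentially verbatim, in its monochromatic form noted in the text: for each $\order\in\ordAll$ maintain a balanced binary search tree of $\pset$ sorted by $\order$, where each comparison costs $O(d\logeps)$ by Lemma~\ref{lemma:compare:order}, so a single insert/delete costs $O(d\logeps\log n)$ per ordering and touches $O(1)$ consecutive pairs --- i.e.\ $O(1)$ edges of $\graphedges_\order$ --- per ordering. Multiplying by $\cardin{\ordAll}=O((1/\eps^d)\log(1/\eps))$ gives update time $O(\log(n)\log^2(1/\eps)/\eps^d)$ and $O(\logeps/\eps^d)$ edge changes to $\graph$. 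The space bound $O(n\logeps/\eps^d)$ is the total size of the $\cardin{\ordAll}$ search trees. The main obstacle is really only the first part: making the induction on $\dist{\pt-\ptq}$ airtight --- in particular confirming that all the in-ball edges are genuinely shorter than $\dist{\pt-\ptq}$ (so the induction is well-founded) and that the accumulated error telescopes to $(1+\eps)$ rather than blowing up --- which is the usual subtlety in locality-based spanner proofs; everything else is bookkeeping over the family $\ordAll$.
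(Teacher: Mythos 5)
Your construction, the edge/degree counts, and the dynamic maintenance via the search trees of \lemref{1d:closest:pair} all match the paper, which likewise just delegates the update analysis to the bichromatic closest-pair data structure. The gap is in your stretch argument. You propose to walk along the \emph{entire} interval of $\order$ between $\pt$ and $\ptq$, and to bound the total Euclidean length of that walk by ``using that the two balls have total diameter $O(\eps')\distY{\pt}{\ptq}$ to control how much length the in-ball portions contribute.'' No such control exists: a ball of radius $\eps'\distY{\pt}{\ptq}$ may contain $\Theta(n)$ points of $\pset$, and the sub-walk visiting all of them in the order $\order$ can have total length $\Theta(n\,\eps'\distY{\pt}{\ptq})$ --- the diameter of a region says nothing about the length of a zig-zag walk inside it. Replacing each in-ball edge by a $(1+\eps)$-approximate path, as you suggest, is vacuous (those edges are already in $\graph$) and in any case only multiplies an already unbounded sum by $(1+\eps)$.

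The induction the paper invokes (following Callahan--Kosaraju) uses only \emph{one} edge supplied by the ordering. Since every point of the interval lies in $B(\pt,\eps'\distY{\pt}{\ptq}) \cup B(\ptq,\eps'\distY{\pt}{\ptq})$, with $\pt$ in the first ball and $\ptq$ in the second, some consecutive pair $\pt',\ptq'$ of $\order$ straddles the two balls, i.e.\ $\distY{\pt}{\pt'} \leq \eps'\distY{\pt}{\ptq}$ and $\distY{\ptq}{\ptq'} \leq \eps'\distY{\pt}{\ptq}$. The edge $\pt'\ptq' \in \graphedges_\order$ has length at most $(1+2\eps')\distY{\pt}{\ptq}$ by the triangle inequality, and one recurses only on the two pairs $(\pt,\pt')$ and $(\ptq',\ptq)$, whose distances are at most $\eps'\distY{\pt}{\ptq} < \distY{\pt}{\ptq}$; strong induction on the finitely many pairwise distances (base case: the closest pair, whose interval is empty, so the edge itself is present) gives paths of length at most $(1+c\eps)\eps'\distY{\pt}{\ptq}$ each. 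Concatenating, $\graphdist{\pt}{\ptq} \leq \bigl((1+2\eps') + 2(1+c\eps)\eps'\bigr)\distY{\pt}{\ptq} \leq (1+c\eps)\distY{\pt}{\ptq}$ for, say, $\eps'=\eps$ and $c=8$ when $\eps \leq 1/4$, and the final rescaling of $\eps$ by $c$ absorbs the constant. With this substitution for your ``walk the whole interval'' step, the remainder of your write-up goes through and coincides with the paper's proof.
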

\begin{proof}
    The construction is described above. The same analysis as in the
    proof of \thmref{bichromatic:closest:pair} implies the number of
    edges in $\graph$ and the update time.
    
    It remains to prove that $\graph$ is a spanner.  By \thmref{lso},
    for any pair of points $\pts, \ptt \in \pset$, there is a \lso
    $\order \in \ordAll$, such that the $\order$-interval
    $[\pts,\ptt)$ contains only points that are in distance at most
    $\epsA \distY{\pts}{\ptt}$ from either $\pts$ or $\ptt$. In
    particular, there must be two points in $\pts',\ptt' \in \pset$
    that are adjacent in $\order$, such that one of them, say $\pts'$
    (resp., $\ptt'$) is in distance at most $\epsA \distY{\pts}{\ptt}$
    from $\pts$ (resp., $\ptt$). As such, the edge $\pts'\ptt'$ exists
    in the graph being maintained.

    This property is already enough to imply that this graph is a
    $(1+c\eps)$-spanner for a sufficiently large constant~$c$---this
    follows by an induction on the distances between the points 
    (specifically, in the above, we apply the induction hypothesis
    on the pairs $s,s'$ and $t,t'$). 
    We omit the easy but somewhat tedious argument---see
    \cite{ck-fasgg-93} or \cite[Theorem 3.12]{h-gaa-11} for details.
    The theorem follows after adjusting $\eps$ by a factor of~$c$.
\end{proof}

\subsubsection{Static and dynamic vertex-fault-tolerant spanners}
\begin{defn}
    \deflab{fault:tolerant:spanner}%
    For a set of $n$ points $\pset$ in $\Re^d$ and a parameter
    $t \geq 1$, a \emphi{$k$-vertex-fault-tolerant $t$-spanner} of
    $\pset$, denoted by $(k,t)$-\VFTS, is a graph
    $\graph = (\pset, \graphedges)$ such that
    \begin{compactenumi}
        \item $\graph$ is a $t$-spanner (see \defref{spanner}), and
        \item For any $\pset' \subseteq \pset$ of size at most $k$,
        the graph $\graph \setminus \pset'$ is a $t$-spanner for
        $\pset \setminus \pset'$.
    \end{compactenumi}
\end{defn}

A $(k,1 + \eps)$-\VFTS can be obtained by modifying the construction
of the $(1 + \eps)$-spanner in \secref{spanners}.
Construct a set of \lso{}s $\ordAll$.
For each $\order \in \ordAll$ and each $\pt \in \pset$, connect $\pt$
to its $k+1$ successors and $k+1$ predecessors according to $\order$
with edge weights equal to the Euclidean distances. Thus each ordering
maintains $O(nk )$ edges and there are
$O(\cardin{\ordAll} kn ) = \Od(kn \log(1/\eps)/\eps^d)$ edges
overall. We now prove that this graph $\graph$ is in fact a
$(k,1 + \eps)$-\VFTS.

\begin{theorem}
    \thmlab{ft:spanners}%
    Let $\pset$ be a set of $n$ points in $[0,1)^d$ and
    $\eps \in (0,1/2]$.  One can compute a $k$-vertex-fault-tolerant
    $(1 + \eps)$-spanner $\graph$ for $\pset$ in time
    \NotSICOMPVer{
    $\Od\pth{(n\log n\log(1/\eps) + kn)\log(1/\eps)/\eps^d}$.
    }
    \SICOMPVer{
    \begin{align*}
    \Od\pth{(n\log n\log(1/\eps) + kn)\log(1/\eps)/\eps^d}.  
    \end{align*}
    }
    The number of edges is $\Od(kn\log(1/\eps)/\eps^d )$ and the 
    maximum degree is bounded by $\Od(k\log(1/\eps)/\eps^d)$.

    Furthermore, one can maintain the $k$-vertex-fault-tolerant
    $(1 + \eps)$-spanner $\graph$ under insertions and deletions of 
    points in
    $\Od\bigl((\log n\log(1/\eps) + k) \log(1/\eps)/\eps^d\bigr)$ 
    time per operation.
\end{theorem}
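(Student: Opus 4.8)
The plan is to verify the three remaining quantitative bounds (the edge count $O(kn\logeps/\eps^d)$ having already been argued above) and then to establish the fault-tolerance guarantee, which is the only part with real content.

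First I would handle the combinatorial and timing bounds. Fix one ordering $\order \in \ordAll$. Every edge incident to a point $\pt$ in this ordering joins $\pt$ to one of its $k+1$ successors or $k+1$ predecessors in $\order$ --- since ``$u$ is among the $k+1$ successors of $\pt$'' is the same statement as ``$\pt$ is among the $k+1$ predecessors of $u$'' --- so every vertex has $\order$-degree at most $2(k+1)$, and summing over the $\cardin{\ordAll} = O(\logeps/\eps^d)$ orderings gives maximum degree $O(k\logeps/\eps^d)$. For the static construction, for each ordering I would sort $\pset$ by $\order$ using $O(n\log n)$ comparisons, each of cost $O(d\logeps)$ by \lemref{compare:order}, and then read off the $k+1$ neighbors on either side of each point in $O(kn)$ further time; over all $\cardin{\ordAll}$ orderings this is $O\pth{(n\log n\logeps + kn)\logeps/\eps^d}$. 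For the dynamic bound I would keep, for each ordering, a balanced binary search tree of the current points sorted by $\order$, threaded so that the $k$ nearest points on each side of a given point can be listed in $O(k)$ time; an insertion or deletion of a point $\pt$ takes $O(\log n)$ comparisons (each of cost $O(d\logeps)$) to locate $\pt$, after which only the $O(k)$ points lying within $k+1$ positions of $\pt$ in $\order$ can see their neighbor-lists change, and each such change touches $O(1)$ edges --- so $O(\log n\logeps + k)$ time and $O(k)$ edge updates per ordering, hence $O\pth{(\log n\logeps + k)\logeps/\eps^d}$ overall.

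The heart of the argument is that $\graph$ is a $(k,1+\eps)$-\VFTS, and the plan is to reduce this to \thmref{dynamic:spanners} rather than to redo any distance analysis. Fix an arbitrary $\pset' \subseteq \pset$ with $\cardin{\pset'} \le k$, and fix any $\order \in \ordAll$. I would first observe that any two points $\pt,\ptq$ that are consecutive in $\order$ restricted to the surviving set $\pset \setminus \pset'$ are joined by an edge of $\graph \setminus \pset'$: by consecutiveness every point of $\pset$ strictly between $\pt$ and $\ptq$ in $\order$ lies in $\pset'$, so there are at most $\cardin{\pset'}\le k$ of them, whence $\ptq$ is among the first $k+1$ successors of $\pt$ in $\order$ restricted to $\pset$, so the construction added the edge $\pt\ptq$, and it survives. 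Therefore $\graph \setminus \pset'$ contains, for every $\order \in \ordAll$, all edges between $\order$-consecutive points of $\pset\setminus\pset'$; but --- since the family $\ordAll$ depends only on $\eps$ and $d$, not on the point set --- this is precisely the graph output by the construction of \thmref{dynamic:spanners} run with parameter $\eps$ on the point set $\pset\setminus\pset' \subseteq [0,1)^d$, which that theorem guarantees to be a $(1+\eps)$-spanner of $\pset\setminus\pset'$. Since adding edges (all of Euclidean weight) only shortens shortest paths, $\graph\setminus\pset'$ is itself a $(1+\eps)$-spanner of $\pset\setminus\pset'$; as this holds for every admissible $\pset'$ (including $\pset'=\emptyset$, which gives that $\graph$ is a $(1+\eps)$-spanner of $\pset$), $\graph$ is a $(k,1+\eps)$-\VFTS.

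I expect the only genuine obstacle to be spotting this reduction: once one sees that, within each ordering, the width-$(k+1)$ neighbor window forces consecutive surviving points to stay adjacent after deleting any $k$ points, the fault-tolerant spanner inherits the correctness of \thmref{dynamic:spanners} for free, with no new triangle-inequality induction. The remainder is routine bookkeeping; the one place needing mild care is the dynamic update, where I must argue that the edge changes total $O(k)$ per ordering --- a constant number at each of the $O(k)$ affected vertices --- rather than $\Theta(k^2)$.
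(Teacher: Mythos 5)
Your proposal is correct and follows essentially the same route as the paper: the key step in both is that after deleting at most $k$ points, any two surviving points that become consecutive in an ordering were within $k+1$ positions of each other originally, so their edge is present, and the spanner property then follows from the argument of \thmref{dynamic:spanners}. Your treatment of the degree, construction-time, and dynamic-update bounds also matches the paper's (terser) accounting, so there is nothing to correct.
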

\begin{proof}
    The construction algorithm, number of edges, and maximum degree
    follows from the discussion above. So, consider deleting a set
    $\pset' \subseteq \pset$ of size at most $k$ from
    $\graph$. Consider an ordering $\order \in \ordAll$ with the
    points $\pset'$ removed. By the construction of $\graph$, all the
    pairs of points of $\pset \setminus \pset'$ that are (now)
    adjacent in $\order$ remain connected by an edge in
    $\graph \setminus \pset'$. The argument of
    \thmref{dynamic:spanners} implies that the remaining graph is
    spanner. We conclude that $\graph \setminus \pset'$ is a
    $(1 + \eps)$-spanner for $\pset \setminus \pset'$.

    As for the time taken to handle insertions and deletions, one
    simply maintains the orderings of the points using balanced search
    trees. After an insertion of a point to one of the orderings in
    $O(\log n \logeps )$ time, $O(k)$ edges have to be added and
    deleted. Therefore inserting a point takes
    $O\bigl((\log n \logeps + k) \cardin{\ordAll}\bigr)
    = \Od\bigl((\log n\log(1/\eps) + k) \log(1/\eps)/\eps^d\bigr)$
    time total. Deletions are handled similarly.
    
    The total construction time follows by inserting each of 
    the points into the dynamic data structure.
\end{proof}

\subsection{Dynamic approximate nearest neighbors}
Another application of the same data structure in
\thmref{bichromatic:closest:pair} is supporting
$(1 + \eps)$-approximate nearest neighbor queries. In this scenario,
the data structure must support insertions and deletions of points and
the following queries: given a point $\ptq$, return a point
$\ptt \in \pset$ such that
$\distY{\ptq}{\ptt} \leq (1 + \eps)\min_{p\in P}\distY{\ptq}{p}$.

\begin{theorem}
    \thmlab{dynamic:nn}%
    \SICOMPVer{\sloppy}
    Let $\pset$ be a set of $n$ points in $[0,1)^d$. For a given
    $\eps \in (0,1/2]$, one can build a data structure using
    $\Od(n\logeps/\eps^d)$ space, that supports insertion and deletion
    in time $\Od(\log n \log^2(1/\eps)/\eps^d)$. Furthermore, given a
    query point $\ptq \in [0,1)^d$, the data structure returns a
    $(1+\eps)$-approximate nearest neighbor in $\pset$ in
    $\Od(\log n\log^2(1/\eps)/\eps^{d})$ time.
\end{theorem}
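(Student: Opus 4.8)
My plan is to reuse the data structure underlying \thmref{bichromatic:closest:pair}: for every \lso $\order \in \ordAll$ from \thmref{lso}, maintain a balanced binary search tree $\Tree_\order$ storing the points of $\pset$ sorted according to $\order$. For the nearest-neighbor application no min-heap is needed, only these trees. Since $\cardin{\ordAll} = O\pth{\eps^{-d}\logeps}$ and each tree holds $n$ points, the space is $O\pth{n\logeps/\eps^d}$. An insertion or deletion of a point is performed in each of the $\cardin{\ordAll}$ trees; one update in a balanced tree uses $O(\log n)$ comparisons, and a single comparison according to an $\eps$-ordering costs $O(d\logeps)$ time by \lemref{compare:order} (equivalently, by the ``furthermore'' part of \thmref{lso}), so the update time is $O\pth{\cardin{\ordAll}\cdot \log n\cdot d\logeps} = O\pth{\log(n)\log^2(1/\eps)/\eps^d}$, folding the dependence on $d$ into the $O(\cdot)$.

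For a query point $\ptq \in [0,1)^d$ I would, for each ordering $\order \in \ordAll$, binary-search for the position of $\ptq$ in $\Tree_\order$ and read off its predecessor and successor among the points of $\pset$ in the order $\order$; this is $O(\log n)$ comparisons per tree, each costing $O(d\logeps)$ time. The query then returns, among the $O\pth{\eps^{-d}\logeps}$ candidate points collected, the one nearest to $\ptq$ in the Euclidean metric. The total time is $O\pth{\cardin{\ordAll}\cdot \log n\cdot d\logeps} = O\pth{\eps^{-d}\log^2(1/\eps)\log n}$, matching the claimed bound.

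The correctness argument is where a little care is needed. Let $\ptt$ be the exact nearest neighbor of $\ptq$ in $\pset$ (if $\ptq \in \pset$ the claim is trivial), and set $r = \distY{\ptq}{\ptt}$. Applying \thmref{lso} to the pair $\ptq,\ptt$ yields an ordering $\order \in \ordAll$ in which every point $\pC$ with $\ptq \prec_\order \pC \prec_\order \ptt$ (or the reverse) satisfies $\distY{\ptq}{\pC}\le \eps r$ or $\distY{\ptt}{\pC}\le \eps r$. Assume $\ptq \prec_\order \ptt$ (the reverse case is symmetric) and let $\pC$ be the successor of $\ptq$ among the points of $\pset$ under $\order$. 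If $\pC = \ptt$ then $\distY{\ptq}{\pC} = r$; otherwise $\pC$ lies strictly between $\ptq$ and $\ptt$, so either $\distY{\ptq}{\pC}\le \eps r$, or $\distY{\ptt}{\pC}\le \eps r$ and then $\distY{\ptq}{\pC}\le \distY{\ptq}{\ptt}+\distY{\ptt}{\pC}\le (1+\eps)r$ by the triangle inequality. Either way this successor is a $(1+\eps)$-approximate nearest neighbor of $\ptq$, and since the query procedure inspects exactly this point (it is the successor of $\ptq$ in $\Tree_\order$) and returns the overall closest candidate, it reports a point within distance $(1+\eps)r$ of $\ptq$.

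The time and space accounting is routine. The one genuinely load-bearing point is the correctness step: I must ensure the \emph{immediate} predecessor/successor of $\ptq$ in the good ordering is the witness --- not some farther point of the $\order$-interval between $\ptq$ and $\ptt$ --- and separately handle the boundary case where $\ptt$ is itself adjacent to $\ptq$. Both are dispatched by the short case analysis above; everything else is a direct appeal to \thmref{lso} and \lemref{compare:order}.
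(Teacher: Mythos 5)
Your proposal is correct and follows essentially the same route as the paper: maintain one balanced search tree per ordering of $\ordAll$, answer a query by inspecting the predecessor and successor of $\ptq$ in each tree, and certify the answer by applying \thmref{lso} to the pair $(\ptq,\ptopt)$. The only cosmetic difference is in the final step --- the paper observes that no point of $\pset$ can be closer to $\ptq$ than $\ptopt$ is, so the point adjacent to $\ptq$ in the good ordering must lie within $\eps\ell$ of $\ptopt$, whereas you reach the same $(1+\eps)$ bound by a direct case analysis on the successor --- and both arguments are valid.
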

\begin{proof}
    Maintain the data structure of \lemref{1d:closest:pair} for all
    \lso{}s of \thmref{lso}, with one difference: Since the input is
    monochromatic, for each \lso{} $\order \in \ordAll$, we store the
    points in a balanced binary search tree according to $\order$. The 
    space and update time bounds easily follow by the same analysis.

    Given a query point $\ptq \in [0,1)^d$, for each of the orderings
    the algorithm inspects the predecessor and successor to
    $\ptq$. The algorithm returns the closest point to $\ptq$
    encountered.  We claim that the returned point $\pt$ is the
    desired approximate nearest neighbor.

    Let $\ptopt \in \pset$ be the nearest neighbor to $\ptq$ and
    $\ell = \dist{\ptq - \ptopt}$. By \thmref{lso}, there is a \lso{}
    $\order \in \ordAll$ such that the $\order$-interval
    $I = [\ptopt, \ptq)$ contains points that are of distance at most
    $\eps \ell$ from $\ptopt$ or $\ptq$ (and this interval contains at
    least one point of $\pset$, namely, $\ptopt$). Note that no point
    of $\pset$ can be at distance less than $\eps \ell$ to
    $\ptq$. %
    Thus, the point $\pt \in \pset$ adjacent to $\ptq$ in $I$ is of
    distance at most $\eps\ell$ from $\ptopt$. Therefore, for such a
    point $\pt$, we have
    $\distY{\pt}{\ptq} \leq \distY{\pt}{\ptopt} + \distY{\ptopt}{\ptq}
    \leq (1 + \eps)\ell$.
    
    The final query time follows from the time taken for these
    predecessor and successor queries, as in the proof of
    \lemref{1d:closest:pair}.
\end{proof}

\section{Conclusion}

In this paper, we showed that any bounded subset of $\Re^d$ has a
collection of ``few'' orderings which captures proximity.  This
readily leads to simplified and improved approximate dynamic data
structures for many fundamental proximity-based problems in
computational geometry. Beyond these improvements, we believe that the
new technique could potentially be simple enough to be useful in
practice, and could be easily taught in an undergraduate level class
(replacing, for example, well-separated pair decomposition---a topic
that is not as easily accessible).

We expect other applications to follow from the technique presented in
this paper. For example, recently Buchin \etal~\cite{bho-spda-19}
presented a near linear-sized construction for robust spanners. The
idea is to build a robust spanner in one dimension, and then obtain a
robust spanner in higher dimensions by applying the one-dimensional
construction using the \lso{}s.

\paragraph*{Acknowledgments.}

The authors thank the anonymous referees for their detailed and useful
comments.

 \providecommand{\CNFX}[1]{ {\em{\textrm{(#1)}}}}

\appendix
\section{Proofs}

\subsection{Proof of \lemref{many:orderings}}
\apndlab{many:orderings}

\begin{figure}
    
    \NotSICOMPVer{
    \begin{minipage}{0.24\linewidth}
        \includegraphics[page=3]{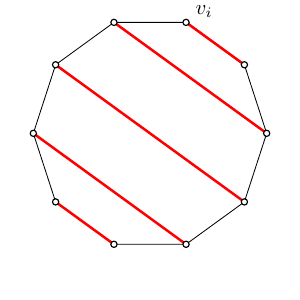}%
    \end{minipage}
    \hfill%
    \begin{minipage}{0.48\linewidth}
        \centerline{%
           \begin{tabular}{*{3}{c}}
             \includegraphics[page=4,scale=0.5]{figs/10}%
             &
               \includegraphics[page=5,scale=0.5]{figs/10}%
             &
               \includegraphics[page=6,scale=0.5]{figs/10}%
             \\
             \includegraphics[page=7,scale=0.5]{figs/10}%
             &
               \includegraphics[page=8,scale=0.5]{figs/10}%
           \end{tabular}%
        }
    \end{minipage}
    \hfill
    \begin{minipage}{0.24\linewidth}
        \includegraphics[page=9]{figs/10}%
    \end{minipage}
    }
    
    \SICOMPVer{
    \begin{minipage}{0.24\linewidth}
    \includegraphics[page=3,scale=0.75]{figs/10}%
    \end{minipage}
    \hfill%
    \begin{minipage}{0.4\linewidth}
        \centerline{%
           \begin{tabular}{*{3}{c}}
             \includegraphics[page=4,scale=0.45]{figs/10}%
             &
               \includegraphics[page=5,scale=0.45]{figs/10}%
             &
               \includegraphics[page=6,scale=0.45]{figs/10}%
             \\
             \includegraphics[page=7,scale=0.45]{figs/10}%
             &
               \includegraphics[page=8,scale=0.45]{figs/10}%
           \end{tabular}%
        }
    \end{minipage}
    \hfill
    \begin{minipage}{0.24\linewidth}
        \includegraphics[page=9,scale=0.75]{figs/10}%
    \end{minipage}
    }
    
    \caption{For $n$ even, a decomposition of $K_n$ into $n/2$
       Hamiltonian paths.}
    \figlab{H:cover}%
\end{figure}

{\RestatementOf{\lemref{many:orderings}}{\LemmaManyOrderings}}

\begin{proof}
    As mentioned earlier this is well known \cite{a-wwc-08}.
    Assume $n$ is even, and consider the clique $K_{n}$, with its
    vertices $v_0,\ldots, v_{n-1}$. The edges of this clique can be
    covered by $n/2$ Hamiltonian paths that are edge disjoint. Tracing
    one of these path gives rise to one ordering, and doing this for
    all paths results with orderings with the desired property, since
    edge $v_iv_j$ is adjacent in one of these paths.
    
    To get this cover, draw $K_n$ by using the vertices of an
    $n$-regular polygon, and draw all the edges of $K_n$ as straight
    segments. For every edge $v_iv_{i+1}$ of $K_n$ there are exactly
    $n/2$ parallel edges with this slope (which form a matching). Let
    $M_i$ denote this matching. Similarly, for the vertex $v_i$,
    consider the segment $v_i v_{i+n/2}$ (indices are here modulo
    $n$), and the family of segments (i.e., edges) of $K_n$ that are
    orthogonal to this segment. This family is also a matching $M_i'$
    of size $n/2-1$. Observe that $\order_i = M_i \cup M_i'$ forms a
    Hamiltonian path, as shown in \figref{H:cover}. Since the slopes
    of the segments in $M_i$ and $M_i'$ are unique, for
    $i=0,\ldots, n/2-1$, it follows that
    $\order_0, \ldots, \order_{n/2-1}$ are an edge-disjoint cover of
    all the edges of $K_n$ by $n/2$ Hamiltonian paths.

    If $n$ is odd, use the above construction for $n+1$, and delete
    the redundant symbol from the computed orderings.
\end{proof}

\subsection{Proof of \lemref{shifting:works} (shifting)}
\apndlab{shifting}%

For two positive real numbers $x$ and $y$, let
\begin{equation*}
    x \Bmod y = x - y\floor{x/y}.   
\end{equation*}
The basic idea behind shifting is that one can pick a set of values
that look the ``same'' in all resolutions.

\begin{lemma}
    \lemlab{yey}%
    \SICOMPVer{\sloppy} Let $n>1$ be a positive odd integer, and
    consider the set
    \begin{equation*}
        X = \Set{ i/n}{i=0,\ldots, n-1 }.
    \end{equation*}
     Then, for any
    $\alpha = 2^{-\ell}$, where $\ell \geq 0$ is integer, we have that
    \begin{equation*}
        X \Bmod \alpha = \Set{ i/n \Bmod \alpha}{i=0,\ldots, n-1}        
    \end{equation*}
    is
    equal to the set
    \begin{math}
        \alpha X = \Set{ \alpha i/n}{i=0,\ldots, n-1}.
    \end{math}
\end{lemma}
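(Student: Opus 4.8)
The plan is to compute $i/n \Bmod \alpha$ explicitly and recognize it as $\alpha$ times a residue modulo $n$, then invoke coprimality of $2^\ell$ and $n$.

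First I would unwind the definition. Fix $\alpha = 2^{-\ell}$ with $\ell \ge 0$. For each $i \in \brc{0,\ldots,n-1}$ write the Euclidean division $2^\ell i = n q_i + r_i$ with $q_i = \floor{2^\ell i / n}$ and $0 \le r_i \le n-1$, so $r_i = 2^\ell i \Bmod n$. Since $0 \le r_i/n < 1$, we get $\floor{(i/n)/\alpha} = \floor{2^\ell i/n} = q_i$, and hence
\begin{equation*}
    \frac{i}{n} \Bmod \alpha
    = \frac{i}{n} - \alpha q_i
    = \frac{i}{n} - \frac{q_i}{2^\ell}
    = \frac{2^\ell i - n q_i}{n 2^\ell}
    = \alpha \cdot \frac{r_i}{n}.
\end{equation*}
So $X \Bmod \alpha = \Set{\alpha r_i / n}{i = 0,\ldots,n-1}$.

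Next I would use the hypothesis that $n$ is odd. This forces $\gcd(2^\ell, n) = 1$, so multiplication by $2^\ell$ is a bijection of $\ZZ/n\ZZ$; in particular the map $i \mapsto r_i = 2^\ell i \Bmod n$ is a permutation of $\brc{0,\ldots,n-1}$. Therefore $\Set{r_i}{i=0,\ldots,n-1} = \brc{0,\ldots,n-1}$, and
\begin{equation*}
    X \Bmod \alpha
    = \Set{\alpha r_i/n}{i=0,\ldots,n-1}
    = \Set{\alpha j/n}{j = 0,\ldots,n-1}
    = \alpha X,
\end{equation*}
as claimed.

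There is no real obstacle here: the only points needing care are the identity $\floor{2^\ell i/n} = q_i$ (immediate from $0 \le r_i < n$) and the observation that oddness of $n$ gives $\gcd(2^\ell,n)=1$, which is precisely where the hypothesis is used. If anything, the mild subtlety is just making sure the argument also covers the degenerate case $\ell = 0$ (where $\alpha = 1$ and both sides equal $X$), which the computation above handles uniformly.
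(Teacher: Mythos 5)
Your proof is correct. You take a genuinely different route from the paper: you derive the closed form $i/n \Bmod \alpha = \alpha\,(2^{\ell} i \Bmod n)/n$ in one step (via the Euclidean division $2^{\ell} i = n q_i + r_i$ and the observation $\floor{(i/n)/\alpha} = q_i$), and then conclude by noting that oddness of $n$ makes $i \mapsto 2^{\ell} i \Bmod n$ a permutation of $\brc{0,\ldots,n-1}$. The paper instead proceeds by induction on $\ell$, peeling off one factor of $2$ at a time: it writes $2^{-i}X$ as the arithmetic progression $\brc{0,\Delta,\ldots,2m\Delta}$ with $m=(n-1)/2$, reorders it, and checks that reducing modulo $2^{-i-1}$ sends $2(m+j)\delta$ to $(2j-1)\delta$, which is exactly the statement that doubling permutes $\ZZ/n\ZZ$ for odd $n$. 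The two arguments rest on the same number-theoretic fact, but yours isolates it cleanly (coprimality of $2^{\ell}$ and $n$) and avoids the inductive bookkeeping, making the role of the oddness hypothesis more transparent; the paper's version has the minor virtue of exhibiting explicitly how the shift set interleaves at successive grid resolutions, which is the geometric picture used downstream in the shifting lemma. Either proof is complete and correct.
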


\begin{proof}
    The proof is by induction. For $\ell =0$ the claim clearly
    holds. Next, assume the claim holds for some $i \geq 0$, and
    consider $\ell = i +1$. Setting $m = (n-1)/2$ and
    $\Delta = 2^{-i}/n$, we have by induction (and rearrangement) that
    \begin{align*}
        X \Bmod 2^{-i}%
        &=%
        2^{-i} X = \brc{ 0, \Delta, \ldots,
           2m\Delta}\\
        &=%
        \brc{ 0, (m+1) \Delta, \Delta, (m+2) \Delta,
           2\Delta, \ldots, (m+m) \Delta, m \Delta }.
    \end{align*}
    Setting $\delta = \Delta/2 = 2^{-i-1}/n$, we have
    \begin{align*}
      &X \Bmod 2^{-i-1}%
      =%
        \pth{X \Bmod  2^{-i}} \Bmod  2^{-i-1}%
      \\&
      =%
      \brc{ 0, (m+1) \Delta,  \Delta,
      (m+2) \Delta,  2\Delta, 
      \ldots,%
      (m+j) \Delta, \,\, j \Delta,\,
      \ldots,%
      (m+m) \Delta, m \Delta }\Bmod  2^{-i-1}
      \\&%
      =%
      \brc{ 0, 2(m+1) \delta,  2\delta,
      2(m+2) \delta,  4\delta,
      \ldots,
      2(m+j) \delta,\, 2j \delta,
      \ldots,
      2(m+m) \delta, 2m \delta }\Bmod  2^{-i-1}        
      \\&%
      =%
      \brc{ 0, \qquad \qquad\delta,  2\delta,
      \quad \quad\quad\; 3 \delta,  4\delta,
      \ldots,
      \,\,(2j-1) \delta, \,2j \delta,\,
      \ldots,
      (2m-1) \delta, 2m \delta },
    \end{align*}
    since $(2m+1)\delta = n \delta = 2^{-i-1}$ and
    $2(m+j )\delta \Bmod 2^{-i-1} = ( 2m+1 + 2j -1 )\delta \Bmod
    2^{-i-1} = (2j-1) \delta$, for $j = 1, \ldots, m$.
\end{proof}

\bigskip%

\RestatementOf{\lemref{shifting:works}}{\LemmaShiftingWorks}

\medskip%

\begin{proof}
    We start with the assumption that $d$ is even (this assumption
    will be removed at the end of the proof).
    Let $\ell \in \Na$, such that for $\alpha = 2^{-\ell}$, we
    have
    \begin{equation*}
        (d+1) \distY{\pA}{\pB}%
        <%
        \alpha%
        \leq
        2(d+1)\distY{\pA}{\pB}.
    \end{equation*}
    For $\tau \in [0,1]$, let $\grid + \tau$ denote the (infinite)
    grid with side length $\alpha$ shifted by the point
    $(\tau, \ldots, \tau)$.

    Let $X = \Set{ i/(d+1) }{i=0,\ldots, d}$ be the set of shifts
    considered. Since we are shifting a grid with side length $\alpha$,
    the shifting is periodical with value $\alpha$. It is thus
    sufficient to consider the shifts modulo $\alpha$.
        
    Let $\pA = (\pA_1,\ldots, \pA_d)$ and
    $\pB= (\pB_1,\ldots, \pB_d)$.  Assume that $\pA_1 \leq \pB_1$. A
    shift $\tau$ is bad, for the first coordinate, if there is an
    integer $i$, such that $\pA_1 \leq \tau + i \alpha \leq \pB_1$.
    The set of bad shifts in the interval $[0,\alpha]$ is
    \begin{equation*}
        B_1 = \Set{\bigl. (\pA_1,\pB_1) + i \alpha}{i \in \ZZ} \cap [0,\alpha].
    \end{equation*}
    The set $B_1$ is either an interval of length
    $|\pA_1 - \pB_1| \leq \distY{\pA}{\pB} < \alpha/(d+1)$, or two
    intervals (of the same total length) adjacent to $0$ and
    $\alpha$. In either case, $B_1$ can contain at most one point of
    $\alpha X = X \Bmod \alpha$, since the distance between any two
    values of $\alpha X$ is at least $\alpha/(d+1)$, by \lemref{yey}.

    Thus, the first coordinate rules out at most one candidate shift
    in $X \Bmod \alpha$.  Repeating the above argument for all $d$
    coordinates, we conclude that there is at least one shift in
    $\alpha X$ that is good for all coordinates. Let
    $\beta = \alpha i/(d+1) \in \alpha X$ this be good shift.  Namely,
    $\pA$ and $\pB$ belong to the same cell of $\grid + \beta$. The
    final step is to observe that shifting the points by $-\beta$,
    instead of the grid by distance $\beta$ has the same effect (and
    $-\beta \Bmod \alpha \in \alpha X$), and as such, the canonical
    cell containing both $\pA$ and $\pB$ is in the quadtree $\qt$ as
    desired, and the side length of this cell is $\alpha$.
    
    Finally, if $d$ is odd, replace $d$ by $d+1$ in the above proof. 
    This results in a set of $d + 2 = 2\ceil{d/2} + 1 = D + 1$ 
    shifts.
\end{proof}


\begin{thebibliography}{CLNS15}

\bibitem[Als08]{a-wwc-08}
Brian Alspach.
\newblock The wonderful {Walecki} construction.
\newblock {\em Bull. Inst. Combin. Appl.}, 52:7--20, 2008.

\bibitem[Aro98]{a-ptase-98}
Sanjeev Arora.
\newblock Polynomial time approximation schemes for {Euclidean} {TSP} and other
  geometric problems.
\newblock {\em \hrefb{http://www.acm.org/jacm/}{J. Assoc. Comput. {Mach.}}}, 45(5):753--782, September 1998.

\bibitem[Ber93]{b-acpqh-93}
Marshall~W. Bern.
\newblock Approximate closest-point queries in high dimensions.
\newblock {\em Inform. Process. Lett.}, 45(2):95--99, 1993.

\bibitem[BHO19]{bho-spda-19}
Kevin Buchin, Sariel Har{-}Peled, and D{\'{a}}niel Ol{\'{a}}h.
\newblock A spanner for the day after.
\newblock In {\em Proc. 35th Int. Annu. Sympos. Comput. Geom. {\em(SoCG)}},
  volume 129 of {\em LIPIcs}, pages 19:1--19:15. Schloss Dagstuhl -
  Leibniz-Zentrum fuer Informatik, 2019.

\bibitem[Cha98]{c-annqr-98}
Timothy~M. Chan.
\newblock Approximate nearest neighbor queries revisited.
\newblock {\em \hrefb{http://link.springer.com/journal/454}{Discrete Comput. {}Geom.}}, 20(3):359--373, 1998.

\bibitem[Cha02]{c-cppsr-02}
Timothy~M. Chan.
\newblock Closest-point problems simplified on the {RAM}.
\newblock In {\em Proc. 13th ACM-SIAM Sympos. Discrete Alg. {\em(SODA)}}, pages
  472--473. {SIAM}, 2002.

\bibitem[Cha06]{c-miann-06}
Timothy~M. Chan.
\newblock A minimalist's implementation of an approximate nearest neighbor
  algorithm in fixed dimensions.
\newblock \url{http://tmc.web.engr.illinois.edu/sss.ps}, 2006.

\bibitem[Cha08]{c-wspdl-08}
Timothy~M. Chan.
\newblock Well-separated pair decomposition in linear time?
\newblock {\em Inform. Process. Lett.}, 107(5):138--141, 2008.

\bibitem[CHJ19]{chj-lso-19}
Timothy~M. Chan, Sariel Har{-}Peled, and Mitchell Jones.
\newblock On locality-sensitive orderings and their applications.
\newblock In {\em 10th Innovations in Theo. Comput. Sci. {\em(ITCS)}}, 2019.

\bibitem[CK93]{ck-fasgg-93}
Paul~B. Callahan and S.~Rao Kosaraju.
\newblock Faster algorithms for some geometric graph problems in higher
  dimensions.
\newblock In Vijaya Ramachandran, editor, {\em Proc. 4th ACM-SIAM Sympos.
  Discrete Alg. {\em(SODA)}}, pages 291--300. {ACM/SIAM}, 1993.

\bibitem[CLNS15]{clns-ndsbs-15}
T.{-}H.~Hubert Chan, Mingfei Li, Li~Ning, and Shay Solomon.
\newblock New doubling spanners: {Better} and simpler.
\newblock {\em {SIAM} J. Comput.}, 44(1):37--53, 2015.

\bibitem[CS17]{cs-ddsah-17}
Timothy~M. Chan and Dimitrios Skrepetos.
\newblock Dynamic data structures for approximate {H}ausdorff distance in the
  word {RAM}.
\newblock {\em Comput. Geom. Theory Appl.}, 60:37--44, 2017.

\bibitem[CZ04]{cz-ftgs-04}
Artur Czumaj and Hairong Zhao.
\newblock Fault-tolerant geometric spanners.
\newblock {\em \hrefb{http://link.springer.com/journal/454}{Discrete Comput. {}Geom.}}, 32(2):207--230, 2004.

\bibitem[Epp95]{e-demst-95}
David Eppstein.
\newblock Dynamic {Euclidean} minimum spanning trees and extrema of binary
  functions.
\newblock {\em \hrefb{http://link.springer.com/journal/454}{Discrete Comput. {}Geom.}}, 13:111--122, 1995.

\bibitem[FK97]{fk-sfpta-97}
Uriel Feige and Robert Krauthgamer.
\newblock Stereoscopic families of permutations, and their applications.
\newblock In {\em Proc. 5th Israel Symp. Theo. Comput. and Systems
  {\em(ISTCS)}}, pages 85--95. {IEEE} Computer Society, 1997.

\bibitem[FRT04]{frt-tbaam-04}
Jittat Fakcharoenphol, Satish Rao, and Kunal Talwar.
\newblock A tight bound on approximating arbitrary metrics by tree metrics.
\newblock {\em J. Comput. Sys. Sci.}, 69(3):485--497, 2004.

\bibitem[FW93]{fw-sitbf-93}
Michael~L. Fredman and Dan~E. Willard.
\newblock Surpassing the information theoretic bound with fusion trees.
\newblock {\em J. Comput. Sys. Sci.}, 47(3):424--436, 1993.

\bibitem[GR08a]{gr-iafdg-08}
Lee{-}Ad Gottlieb and Liam Roditty.
\newblock Improved algorithms for fully dynamic geometric spanners and
  geometric routing.
\newblock In {\em Proc. 19th ACM-SIAM Sympos. Discrete Alg. {\em(SODA)}}, pages
  591--600, 2008.

\bibitem[GR08b]{gr-odsdm-08}
Lee{-}Ad Gottlieb and Liam Roditty.
\newblock An optimal dynamic spanner for doubling metric spaces.
\newblock In {\em Proc. 16th Annu. Euro. Sympos. Alg. {\em(ESA)}}, pages
  478--489, 2008.

\bibitem[{Har}11]{h-gaa-11}
Sariel {Har-Peled}.
\newblock {\em Geometric Approximation Algorithms}, volume 173 of {\em Math.
  Surveys \& Monographs}.
\newblock Amer. Math. Soc., Boston, MA, USA, 2011.

\bibitem[HM85]{hm-ascpp-85}
Dorit~S. Hochbaum and Wolfgang Maass.
\newblock Approximation schemes for covering and packing problems in image
  processing and {VLSI}.
\newblock {\em \hrefb{http://www.acm.org/jacm/}{J. Assoc. Comput. {Mach.}}}, 32(1):130--136, 1985.

\bibitem[HM06]{hm-fcnld-06}
Sariel {Har-Peled} and Manor Mendel.
\newblock Fast construction of nets in low dimensional metrics, and their
  applications.
\newblock {\em SIAM J. Comput.}, 35(5):1148--1184, 2006.

\bibitem[HRW15]{hrw-ffdms-15}
Jacob Holm, Eva Rotenberg, and Christian Wulff{-}Nilsen.
\newblock Faster fully-dynamic minimum spanning forest.
\newblock In Nikhil Bansal and Irene Finocchi, editors, {\em Proc. 23rd Annu.
  Euro. Sympos. Alg. {\em(ESA)}}, volume 9294 of {\em Lect. Notes in Comp.
  Sci.}, pages 742--753. Springer, 2015.

\bibitem[IM98]{im-anntrcd-98}
Piotr Indyk and Rajeev Motwani.
\newblock Approximate nearest neighbors: Towards removing the curse of
  dimensionality.
\newblock In {\em Proc. 30th ACM Sympos. Theory Comput. {\em(STOC)}}, pages
  604--613, 1998.

\bibitem[KF93]{kf-oprt-93}
Ibrahim Kamel and Christos Faloutsos.
\newblock On packing {$R$}-trees.
\newblock In Bharat~K. Bhargava, Timothy~W. Finin, and Yelena Yesha, editors,
  {\em Proc. 2nd Intl. Conf. Info. Knowl. Mang.}, pages 490--499. {ACM}, 1993.

\bibitem[KL13]{kl-ecpd-13}
Sanjiv Kapoor and Xiang{-}Yang Li.
\newblock Efficient construction of spanners in $d$-dimensions.
\newblock {\em CoRR}, abs/1303.7217, 2013.

\bibitem[LLL01]{lll-hdsssfc-01}
Swanwa Liao, Mario~A. L{\'{o}}pez, and Scott~T. Leutenegger.
\newblock High dimensional similarity search with space filling curves.
\newblock In {\em Proc. 17th Int. Conf. on Data Eng. {\em(ICDE)}}, pages
  615--622, 2001.

\bibitem[LNS98]{lns-eacft-98}
Christos Levcopoulos, Giri Narasimhan, and Michiel H.~M. Smid.
\newblock Efficient algorithms for constructing fault-tolerant geometric
  spanners.
\newblock In Jeffrey~Scott Vitter, editor, {\em Proc. 30th ACM Sympos. Theory
  Comput. {\em(STOC)}}, pages 186--195. {ACM}, 1998.

\bibitem[Luk99a]{l-nrftg-99}
Tam{\'{a}}s Lukovszki.
\newblock New results of fault tolerant geometric spanners.
\newblock In Frank K. H.~A. Dehne, Arvind Gupta, J{\"{o}}rg{-}R{\"{u}}diger
  Sack, and Roberto Tamassia, editors, {\em Proc. 6th Workshop Alg. Data
  Struct. {\em(WADS)}}, volume 1663 of {\em Lect. Notes in Comp. Sci.}, pages
  193--204. Springer, 1999.

\bibitem[Luk99b]{l-nrgsa-99}
Tam{\'{a}}s Lukovszki.
\newblock {\em New results on geometric spanners and their applications}.
\newblock PhD thesis, University of Paderborn, Germany, 1999.

\bibitem[Mor66]{m-c-66}
Guy~M. Morton.
\newblock A computer oriented geodetic data base and a new technique in file
  sequencing.
\newblock Technical report, IBM, Ottawa, Ontario, March 1966.

\bibitem[Pea90]{p-sucqr-1890}
G.~Peano.
\newblock Sur une courbe, qui remplit toute une aire plane.
\newblock {\em Mathematische Annalen}, 36(1):157--160, March 1890.

\bibitem[Rod12]{r-fdgs-12}
Liam Roditty.
\newblock Fully dynamic geometric spanners.
\newblock {\em Algorithmica}, 62(3-4):1073--1087, 2012.

\bibitem[Sag94]{s-sfc-94}
Hans Sagan.
\newblock {\em Space-filling curves}.
\newblock Universitext Series. Springer-Verlag, 1994.

\bibitem[Sol14]{s-fhphc-14}
Shay Solomon.
\newblock From hierarchical partitions to hierarchical covers: {O}ptimal
  fault-tolerant spanners for doubling metrics.
\newblock In David~B. Shmoys, editor, {\em Proc. 46th ACM Sympos. Theory
  Comput. {\em(STOC)}}, pages 363--372. {ACM}, 2014.

\bibitem[vEB77]{e-poflt-77}
Peter van Emde~Boas.
\newblock Preserving order in a forest in less than logarithmic time and linear
  space.
\newblock {\em Inf. Process. Lett.}, 6(3):80--82, 1977.

\bibitem[{Ver}05]{v-cbseb-05}
Jean-Louis {Verger-Gaugry}.
\newblock Covering a ball with smaller equal balls in {$\Re^n$}.
\newblock {\em \hrefb{http://link.springer.com/journal/454}{Discrete Comput. {}Geom.}}, 33(1):143--155, 2005.

\end{thebibliography}
\end{document}